\newcommand{\expect}[1]{\mathbb{E}[#1]}
\newcommand{\var}{\text{Var}}
\newcommand{\pr}[1]{\Pr\left[#1\right]}
\newcommand{\x}{{\bf x}}
\newcommand{\simiid}{\overset{i.i.d.}{\sim}}
\newcommand{\errcv}{\text{e}_{\text{Kcv}}}
\newcommand{\errval}{\text{e}_{\text{val}}}
\newcommand{\errgen}{\text{e}_{\text{gen}}}
\newcommand{\Strain}{S_{\text{tr}}}
\newcommand{\Sval}{S_{\text{val}}}
\newcommand{\bias}{\text{bias}}
\newcommand{\hatT}{\widehat{T}}
\newcommand{\software}[1]{\texttt{#1}}
\newcommand{\R}{{\sf R}}
\newcommand{\TX}{\tilde{\mathcal X}}
\newcommand{\convprob}{\overset{P}{\to}}
\newcommand{\convdist}{\overset{\mathcal{D}}{\to}}
\DeclareMathOperator*{\argmin}{\arg\!\min}
\newcommand{\E}{\mathbb{E}}
\newtheorem{definition}{Definition}
\newtheorem{theorem}{Theorem}
\newtheorem{proposition}[theorem]{Proposition}
\newtheorem{corollary}[theorem]{Corollary}
\newtheorem{lemma}[theorem]{Lemma}
\newtheorem{remark}{Remark}
\title{On the cross-validation bias due to unsupervised pre\-processing}
\author{Amit Moscovich\footnote{Corresponding author}}
\address{Program in Applied and Computational Mathematics, Princeton University,\\Princeton NJ, USA.} 
\email{amit@moscovich.org}
\author[A. Moscovich and S. Rosset]{Saharon Rosset}
\address{Department of Statistics and Operations Research, Tel-Aviv University, Tel-Aviv, Israel.}
\email{saharon@post.tau.ac.il}
\begin{document}

\begin{abstract} 
Cross-validation is the de facto standard for predictive model evaluation and selection.
In proper use, it provides an unbiased estimate of a model's predictive performance. 
However, data sets often undergo various forms of data-dependent preprocessing,
such as mean-centering, rescaling, dimensionality reduction, and outlier removal.
It is often believed that such preprocessing stages, 
if done in an \textit{unsupervised} manner (that does not incorporate the class labels or response values) are generally safe to do prior to cross-validation.

In this paper, we study three commonly-practiced preprocessing procedures prior to a regression analysis:
(i) variance-based feature selection; (ii) grouping of rare categorical features; and (iii) feature rescaling.
We demonstrate that unsupervised preprocessing can, in fact, introduce a substantial bias into cross-validation estimates and potentially hurt model selection.
This bias may be either positive or negative and its exact magnitude depends on all the parameters of the problem in an intricate manner.
Further research is needed to understand the real-world impact of this bias across different application domains, particularly when dealing with small sample sizes and high-dimensional data.

\end{abstract}

\keywords{Cross-validation; Preprocessing; Model selection; Predictive modeling}

\section{Introduction} \label{sec:intro}

Predictive modeling is a topic at the core of statistics and machine learning that is concerned with  predicting an output $y$ given an input
$\x$.
There are many well-established algorithms for constructing predictors $f:\mathcal X \to \mathcal Y$ from a representative data set of
input-output pairs \(\{ (\x_1, y_1), \ldots (\x_N, y_N) \} \).
Procedures for \emph{model evaluation} and \emph{model
selection} are used to estimate the performance of predictors  on new observations and to choose between them.
Commonly used procedures for model evaluation and selection include leave-one-out cross-validation,
K-fold cross-validation, and the simple train-validation split. 
In all of these procedures, the data set $S$
is partitioned into a training set $\Strain$ and a validation set $\Sval$.
Then a predictor, or set of predictors, is constructed from $\Strain$ and evaluated on $\Sval$.
See Chapter 5 of \cite{JamesWittenHastieTibshirani2014} for background on  cross-validation
and \cite{ArlotCelisse2010} for a mathematical survey.

Given a predictor and assuming that all of the observations $(\x_i, y_i)$ are independent and identically distributed, the mean error that a predictor
has on a validation set
is an unbiased estimate of that predictor's \emph{generalization error}, or \emph{risk}, defined as the expected error on a new sample.
In practice, however, data sets are often preprocessed by a data-dependent transformation
prior to model evaluation.
A simple example is mean-centering, whereby one first computes the empirical mean $\hat{\boldsymbol{\mu}}$ of the feature vectors (covariates) in the data
set and then maps each feature vector via $T_{\hat{\boldsymbol{\mu}}}(\x) = \x - \hat{\boldsymbol{\mu}}$.
After such a preprocessing stage, the transformed validation set no longer has
the same distribution as new $T_{\hat{\boldsymbol{\mu}}}$-transformed observations.
This is due to the dependency between the validation set and $\hat{\boldsymbol{\mu}}$.
Hence, the validation error is no longer guaranteed to be an unbiased estimate of the generalization error.
Put differently, by including a preliminary preprocessing stage, constructed from both the training and validation sets,
leakage of information from the validation set is introduced that may have an adverse effect on model evaluation \citep{KaufmanRossetPerlichStitelman2012}.
We consider two types of data-dependent transformations.
\paragraph{Unsupervised transformations:} $T:\mathcal X \to \mathcal X$ that are constructed only from $\x_1, \ldots \x_N$.
Common examples include mean-centering, standardization/rescaling, dimensionality reduction, outlier removal and grouping of categorical values.
\paragraph{Supervised transformations:} $T:\mathcal X \to \mathcal X$ whose construction depends on both  $\x_1, \ldots \x_N$ and $y_1, \ldots
y_N$.
Various forms of feature selection fall into this category.\\

Preliminary \textit{supervised} preprocessing is a well-known (but often repeated) pitfall.
For example,  performing feature selection on the entire data set
may find features that happen to work particularly well on the validation set, thus
typically leading to optimistic error estimates \citep{AmbroiseMclachlan2002, SimonEtal2003}.
In contrast, unsupervised preprocessing is widely practiced and believed by leading statisticians to be safe.
For example, in \textit{The Elements of Statistical Learning} \citep[p. 246]{HastieTibshiraniFriedman2009},
the authors warn against supervised preprocessing, but make the following claim regarding unsupervised preprocessing:

\begin{quote}
    \textit{
        In general, with a multistep modeling procedure, cross-validation must be applied to the entire sequence of modeling steps.
        In particular, samples must be ``left out'' before any selection or filtering steps are applied. There is one qualification: initial
unsupervised screening steps can be done before samples are left out.
        For example we could select the 1000 predictors with highest variance across all 50 samples,
        before starting cross-validation.
        Since this filtering does not involve the class labels,
        it does not give the predictors an unfair advantage.
    }
    \label{quote:ESL}
\end{quote}
In this paper, we show that contrary to these widely held beliefs, various forms of unsupervised preprocessing may, in fact, introduce a substantial bias to the cross-validation estimates of model performance.
Our main example, described in Section \ref{sec:feature_selected_linreg}, is an analysis of variance-based filtering prior to linear regression in the spirit of the setup in the quote above.
We demonstrate that the bias of cross-validation in this case can be large and  have an adverse effect on model selection.
Furthermore, we show that the sign and magnitude of the resulting bias depend on all the components of the data-processing pipeline: the distribution of the data points, the preprocessing transformation, the predictive procedure used, and the sizes of both the training and validation sets.

The rest of this paper proceeds as follows:
in Section \ref{sec:motivation} and the Supplementary we make the case that this methodological error is common in a wide range of scientific disciplines and tools, despite the fact that it can be easily avoided (see Section \ref{sec:therightway}).
In Sections \ref{sec:notation} and \ref{sec:basic_properties} we give the basic definitions and properties of the bias due to unsupervised preprocessing.
In addition to the main example presented in Section~\ref{sec:feature_selected_linreg}, in Section~\ref{sec:additional_examples} we study two additional examples:  grouping of rare categories and rescaling prior to lasso linear regression.
These examples shed more light on the origins of the bias and highlight the richness of the phenomenon.
In Section~\ref{sec:model_selection} we consider model selection in the presence of unsupervised preprocessing and demonstrate that it indeed induces a small performance penalty in the case of rescaled lasso linear regression.
In Section~\ref{sec:pathological}, we explain why meaningful upper bounds on the bias are unattainable under the most general
settings. However, we describe particular settings where such upper bounds may be attained after all.

\subsection{Motivation} \label{sec:motivation}
In this section, we establish the fact that unsupervised preprocessing is very common in science and engineering.
We first note that in some scientific fields, the standard methodology incorporates unsupervised preprocessing stages into the computational
pipelines.
For example in Genome-Wide Association Studies (GWAS) it is common to standardize genotypes to have zero mean and unit variance prior to analysis
\citep{YangEtal2010,SpeedBalding2014}.
In EEG studies, data sets are often preprocessed using independent component analysis, principal component analysis or similar methods
to remove artifacts such as those resulting from eye blinks \citep{UriguenGarciaZapirain2015}. 

To quantitatively estimate the prevalence of unsupervised preprocessing in scientific research,
we have conducted a review of research articles published in \textit{Science Magazine} over a period of 1.5 years.
During this period, we identified a total of 20 publications that employ cross-validated predictive modeling.
After carefully reading them, we conclude that seven of those papers (35\%) performed some kind of unsupervised preprocessing \emph{on the entire data set} prior to cross-validation. Specifically, three papers filtered a categorical feature based on its count \citep{Dakin2018, Cohen2018, Scheib2018};
two papers performed feature standardization \citep{Liu2017, Ahneman2018}; one paper discretized a continuous variable, with cutoffs based on
its percentiles \citep{Davoli2017};
and one paper computed PCA on the entire data set, and then projected the data onto the first principal axis \citep{Ni2018}.
The full details of our review appear in the Supplementary.

Many practitioners are careful to always split the data set into a training set and a validation set before any processing is performed.
However, it is often the case, both in academia and industry, that by the time the data is received it has already undergone various stages of
preprocessing.
Furthermore, even in the optimal case, when the raw data is available, some of the standard software tools do not currently have the built-in facilities to correctly incorporate preprocessing into the cross-validation procedure.
One example is the widely used \software{LIBSVM} package \citep{libsvm2011}.
In their user guide, they recommend to first scale all of the features in the entire data set using the \software{svm-scale} command and only
then to perform cross-validation or train-validation splitting \citep{libsvmguide2010}. Furthermore, it seems there is no easy way to use their
 command line tools to perform scaling that is based only on the training set and then apply it to the validation set.

\subsection{The right way to combine preprocessing and cross-validation} \label{sec:therightway}

To guarantee that the cross-validation estimator is an unbiased estimator of model performance, all data-dependent unsupervised preprocessing
operations should be determined using only the training set $\Strain$ and then merely applied to the validation set $\Sval$, as is commonly done for (label-dependent) feature-selection and other supervised preprocessing procedures.
In other words, preprocessing steps should be deemed an inseparable part of the learning algorithm.
We summarize this approach here:

\paragraph{(Step 1) Preprocessing.}
Learn a feature transformation $\hatT: \mathcal{X} \to \TX$ using just the training set $\Strain$.

\paragraph{(Step 2) Training.}
Transform the feature vectors of $\Strain$ using  $\hatT$ and then learn a predictor $\hat{f}_{\Strain}$ from the transformed
training set
$\left\{ \big(\hatT(\x), y \big) : (\x,y) \in \Strain \right\}$.

\paragraph{(Step 3) Validation.} For every observation $(\x, y)$ in $\Sval$, compute a prediction for the transformed feature
vector  $\hat{y} = \hat{f}_{\Strain}\big(\hatT(\x)\big)$
and evaluate some loss function $\ell(y, \hat{y})$.\\\\
For example, to perform standardization of univariate data one would estimate the empirical mean $\hat{\mu}_{\text{tr}}$ and empirical
standard deviation
$\hat{\sigma}_{\text{tr}}$ of the covariates in $\Strain$ and then construct the standardizing transformation $\hatT(x) = (x-\hat{\mu}_{\text{tr}})/\hat{\sigma}_{\text{tr}}$ to be applied to both the training and validation sets.
In a cross-validation procedure, the above steps would be repeated  for every
split of the data set into a training set and a validation set.

Some of the leading frameworks for predictive modeling provide mechanisms to effortlessly perform the above steps.
Examples include the \software{preProcess} function of the caret \R\ package,
the pipeline module in the scikit-learn Python library
and ML Pipelines in Apache Spark MLlib \citep{Kuhn2008, scikitlearn2011, MLlib2016}.

\subsection{Related works}

To the best of our knowledge, the only previous work that directly addresses biases due to unsupervised preprocessing is an empirical study
focused on gene microarray analysis \citep{Hornung2015}.
They studied several forms of preprocessing, including variance-based filtering and imputation of missing values, but mainly focused on PCA dimensionality reduction and Robust Multi-Array Averaging (RMA), a multi-step data normalization procedure for gene microarrays.
In contrast to our work, they do not measure the bias of the cross-validation error with respect to the generalization error of the same model.
Rather, they consider the ratio of cross-validation errors of two different models: one where the preprocessing is performed on the entire data
set, and one where the preprocessing is done in the ``proper'' way, as in the three step procedure outlined in Section \ref{sec:therightway}.
They conclude that RMA does not incur a substantial bias in their experiments, whereas PCA dimensionality reduction results in overly-optimistic
cross-validation estimates.

\section{Notation and definitions} \label{sec:notation}

In this section, we define some basic notation and use it to express two related approaches for model evaluation: train-validation splitting and cross-validation.
Then we define the bias due to unsupervised preprocessing.

\subsection{Statistical learning theory and cross-validation}
Let $\mathcal X$ be an input space, $\mathcal Y$ be an output space and $\mathcal D$ a probability distribution over the space $\mathcal X \times
\mathcal
Y$ of input-output pairs.
Let $S = \{(\x_1, y_1), \ldots, (\x_N, y_N)\}$ be a set of pairs sampled independently from $\mathcal D$.
In the basic paradigm of statistical learning, an algorithm $A$ for learning predictors
takes $S$ as input and outputs a predictor $\hat{f}_S:\mathcal X \to \mathcal Y$.
To evaluate predictors' performance, we need a loss function $\ell(y,y')$
that quantifies the penalty of predicting $y'$ when the true value is $y$.
The \emph{generalization error} (or \emph{risk}) of a predictor $f$ is its expected loss, 
\begin{align}
    \errgen(f, \mathcal D) := \E_{\x, y} \, \ell \big( y, f(\x) \big).
\end{align}
When the distribution $\mathcal D$ is known, the generalization error can be estimated directly by integration or repeated sampling. However,
in many cases, we only have a finite set of $N$ observations at our disposal. In that case, a common approach
for estimating the performance of a modeling algorithm is to split the data set into a \emph{training set} of size $n$ and a \emph{validation
set} of size $m=N-n$.
\begin{align}
    \Strain &= \{(\x_1, y_1), \ldots, (\x_n, y_n)\} \\
    \Sval &= \{(\x_{n+1}, y_{n+1}), \ldots, (\x_{n+m}, y_{n+m})\} \qquad(N = n+m)
\end{align}
The model is then constructed (or trained) on the training set and evaluated on the validation set.
We denote the learned predictor by $\hat{f}_{\Strain}$.
Its validation error is the average loss over the validation set,
\begin{align}
    \errval(\hat{f}_{\Strain}, S_{\text{val}} ) := \frac{1}{|\Sval|} \sum_{(\x,y) \in \Sval} \ell \big( y, \hat{f}_{\Strain}(\x) \big).
\end{align}
This approach is known as the \emph{train-validation split} (or train-test split).
Its key property is that it provides an unbiased estimate of the predictor's generalization error
given a training set of size $n$, since we have, for any predictor $f$,
\begin{align}
    \E_{\Sval} \, 
    \errval(f, \Sval)
    =
    \errgen(f, \mathcal D)
    .
\end{align}
A more sophisticated approach is \emph{K-fold cross-validation}.
In this approach the data set $S$ is partitioned into $K$ folds of size $N/K$
(we assume for simplicity that $N$ is divisible by $K$).
The model is then trained on $K-1$ folds and its average loss is computed
on the remaining fold. This is repeated for all $K$ choices of the validation
fold and the results are averaged to form the \emph{K-fold cross-validation error} $\errcv$.

\subsection{The bias due to unsupervised preprocessing} \label{sec:biasdef}

We study the setting where the instances of both the training and validation sets undergo an unsupervised transformation prior
to cross-validation.
We denote by
\begin{align}
    A_T: \mathcal X^{n+m} \to (\mathcal X \to \TX),
\end{align}
an unsupervised procedure that takes as input the set of feature vectors $\{\x_1, \ldots, \x_{n+m}\}$ and outputs a transformation $T: \mathcal
X \to \TX$.
The space of transformed feature vectors $\tilde{\mathcal X}$ may be equal to $\mathcal X$, for example when $T$ is a scaling transformation,
or it may be different, for example when $T$ is some form of dimensionality reduction.
We denote by
\begin{align}
    A_f: (\TX \times \mathcal Y)^n \to (\TX \to \mathcal Y),
\end{align}
a learning algorithm that takes  a transformed training set $\{ \big( T(\x_1), y_1 \big), \ldots, \big( T(\x_n), y_n \big)\}$ and outputs a predictor for transformed
feature vectors $f: \TX \to \mathcal Y$.
In the following we denote \( \hatT := \hatT_S = A_T(\x_1, \ldots, \x_{n+m}) \)
and \( \hat{f} := \hat{f}_S = A_f\left\{ \big( \hatT(\x_1), y_1\big), \ldots, \big( \hatT(\x_n), y_n \big) \right\}. \)
The validation error is
\begin{align} \label{def:errval}
    \errval(A_T, A_f, S) := \frac{1}{|\Sval|} \sum_{(\x,y) \in \Sval} \ell[y, \hat{f}\{\hatT(\x)\}].
\end{align}
Likewise, the generalization error is
\begin{align} \label{def:errgen}
    \errgen(A_T, A_f, S, \mathcal D) := \E_{(\x,y) \sim \mathcal D} \, \ell [ y, \hat{f}\{\hatT(\x)\}] .
\end{align}
The focus of this paper is the bias of the validation error with respect to the generalization error, due to the
fact that the feature vectors in the validation set were involved in forming the unsupervised transformation $\hatT$.
\begin{definition} \label{def:biasp}
    The bias of a learning procedure $(A_T, A_f)$ composed of an unsupervised transformation-learning algorithm $A_T$ and a predictor learning algorithm $A_f$ is 
    \begin{align} \label{eq:ppbias}
        \bias(A_T, A_f, \mathcal D, n, m)
        &:=
        \E
        \left\{
            \errval(A_T, A_f, S)
            -
            \errgen(A_T, A_f, S, \mathcal D)
        \right\}.
    \end{align}
\end{definition}
Note that instead of analyzing the bias of the train-validation split estimator, we may consider the bias of K-fold cross-validation
\(
    \expect{\errcv-\errgen}
\).
However, due to the linearity of expectation, this bias is equal to $\bias(A_T, A_f, \mathcal D,(K-1)s,s)$ where $s$ is the
fold size.
Hence, our analysis applies equally well to K-fold cross-validation.

\section{Basic properties of the bias} \label{sec:basic_properties}

Practically all methods of preprocessing learned from i.i.d. data do not depend on the order of their inputs.
Thus, typically $A_T$ is a symmetric function.
This simplifies  the expression for the expected bias.

\begin{proposition}
    If $A_T$ is a symmetric function then the expected validation error admits the following simplified form,
    \begin{align} \label{eq:expected_eval_symmetric}
        \E_{S} \, \errval = \E_{S} \, \ell [ y_{n+1}, \hat{f}\{ \hatT(\x_{n+1}) \} ].
    \end{align}
    Hence we obtain a simplified expression for the bias,
    \begin{align} \label{eq:bias_symm_T}
        \bias(A_T, A_f, \mathcal D, n, m)
        =
        \E_{S, \x,y} \, \ell [ y_{n+1}, \hat{f}\{ \hatT(\x_{n+1}) \} ] - \ell [ y, \hat{f} \{ \hatT(\x) \} ].
    \end{align}
\end{proposition}
\begin{proof}
If $A_T$ is invariant to permutations of its input then the vector of transformed training feature vectors 
$(\hatT(\x_1), \ldots, \hatT(\x_n))$ is invariant to permutations of the feature vectors in the validation set $\x_{n+1}, \ldots, \x_{n+m}$.
Hence the chosen predictor $\hat{f}$ does not depend on the ordering of the validation covariates.
It follows that the random variables $\ell(y_{i}, \hat{f}(\hatT(\x_{i})))$ are identically distributed for all $i \in \{n+1, \ldots, n+m\}$.
Eq. \eqref{eq:expected_eval_symmetric} follows from \eqref{def:errval} by the linearity of expectation.
$\hspace*{\fill} \square$
\end{proof}

\begin{remark}
    Even though the expected validation error in \eqref{eq:expected_eval_symmetric} does not explicitly depend on $n,m$, there is an implicit
dependence
    due to the fact that the distributions of the selected transformation $\hatT$ and predictor $\hat{f}$ depend on $n$ and $m$.
\end{remark}

Were the feature transformations chosen in a manner that is data-independent, then the transformed validation covariates $\hatT(\x_{n+1}), \ldots,
\hatT(\x_{n+m})$
would be a independent and distributed as $\hatT(\x)$ where $\x \sim \mathcal D_{\mathcal X}$.
In that case, $\bias$ would be zero. However, since $\hatT$ is chosen in a manner that depends on $\x_{n+1}, \ldots, \x_{n+m}$,
the distribution of $\hatT(\x_i)$ for $i \in \{n+1, \ldots, n+m\}$ may be vastly different from that of $\hatT(\x)$ for newly generated observations.
For an extreme example of this phenomenon see Section~\ref{sec:pathological}.

\section{Main example: feature selection for high-dimensional linear regression} \label{sec:feature_selected_linreg}

In this section we consider variance-based feature selection performed on the entire data set prior to linear regression and evaluated using cross-validation.
We demonstrate that this can incur a substantial bias in the validation error with respect to the true model error.
This is demonstrated on both synthetic data and on a real dataset. 
The details of our experiment follow.

\paragraph{Sampling distribution:}
We generate a random vector of coefficients $\boldsymbol{\beta} = (\beta_1, \ldots, \beta_p)^T$ where $\beta_i \sim \mathcal{N}(0,1)$.
Each observation $(\x, y)$ is given by:
\begin{align}
    \x = (C x_1, \ldots, C x_M, x_{M+1}, \ldots, x_p)  \qquad y = \x \boldsymbol{\beta} +\ \epsilon
\end{align}
where $x_1, \ldots, x_p$ are drawn i.i.d. from some zero-mean distribution, $C>1$ is a constant, and $\epsilon$ is a Gaussian noise term. We have tested two distributions for $x_1, \ldots, x_p$: a standard Gaussian and a t-distribution with 4 degrees of freedom.
By construction, the variance of $\x \boldsymbol{\beta}$ is proportional to $(p-M)+ C^2M$, so given a noise level $\eta > 0$ we set the noise term to
\(
    \epsilon \sim \mathcal{N}(0, \sigma^2).
\)
where $\sigma^2 = \eta \cdot\{ (p-M)+C^2M \}$.
                                                                                                                                                                                                                                                                                                                        
\paragraph{Preprocessing:}
Variance-based feature selection. The unsupervised transformation is $\widehat{T}(\x) = (x_{j_1}, \ldots, x_{j_K})$ where $j_1, \ldots, j_K$ are the $K$ covariates with highest empirical variance. Importantly, this variance is computed over the entire dataset.

\paragraph{Predictor:}
Linear regression with no intercept: $\hat{f}\{\hatT(\x)\} = (x_{j_1}, \ldots, x_{j_K})\hat{\boldsymbol{\beta}}$, where
\begin{align}
    \hat{\boldsymbol{\beta}}
    &:=
    \argmin_{\boldsymbol{\beta} \in \mathbb{R}^K} \sum_{i=1}^n \left( \beta_1 x_{j_1} + \ldots + \beta_K x_{j_K} - y_i\right)^2 .
\end{align}

We've chosen the sampling distribution so that the first $M$ features will have a larger magnitude and a correspondingly larger influence on the response.

\begin{figure}
    \includegraphics[width=0.5\linewidth]{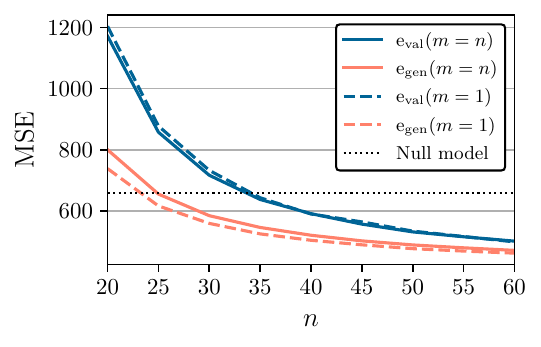}
    \includegraphics[width=0.5\linewidth]{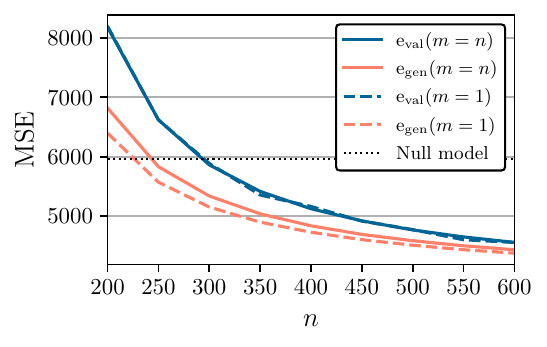}
    \includegraphics[width=0.5\linewidth]{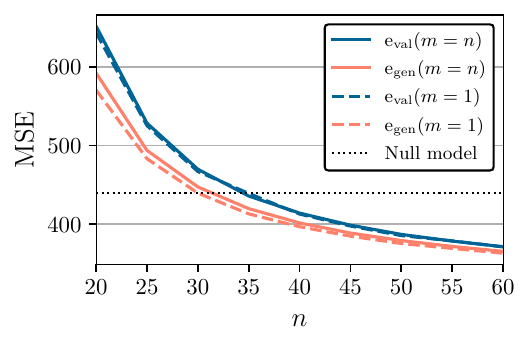}
    \includegraphics[width=0.5\linewidth]{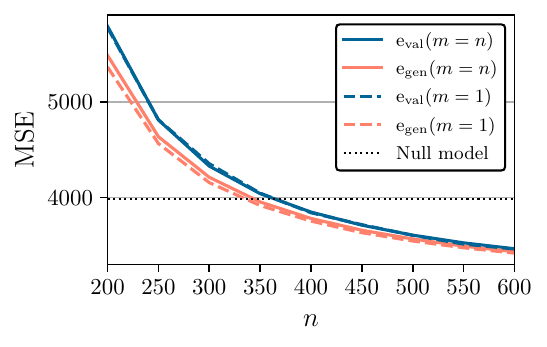}
    \caption{Validation and generalization mean squared errors (MSE) for variance-based feature selection followed by high-dimensional
linear regression.
$n$ is the size of the training set and $m$ is the size of the validation set.
    Solid lines correspond to $m=n$ as in two-fold cross-validation whereas dashed lines correspond to
a $m=1$ as in leave-one-out cross-validation. The black dotted line is the error of  the null
model $f(\x) \equiv 0$.
    The first $M$ variables are multiplied by the constant $C>1$. The feature selection procedure  picks the $K$ columns
with highest variance over the entire sample of size $m+n$.}
    \label{fig:variance_selected_linear_regression}
    \sffamily\footnotesize
    \begin{tabular}{lllll}
        (top left)     & $p=100$  & $M=C=5$  & $K=10$  & $x_{ij} \sim t(4)$. \\
        (bottom left)  & $p=100$  & $M=C=5$ & $K=10$  & $x_{ij} \sim \mathcal{N}(0,1)$. \\
        (top right)    & $p=1000$ & $M=C=10$ & $K=100$ & $x_{ij} \sim t(4)$. \\
        (bottom right) & $p=1000$ & $M=C=10$ & $K=100$ & $x_{ij} \sim \mathcal{N}(0,1)$.
    \end{tabular}
\end{figure}

\begin{figure}
    \includegraphics[width=0.5\linewidth]{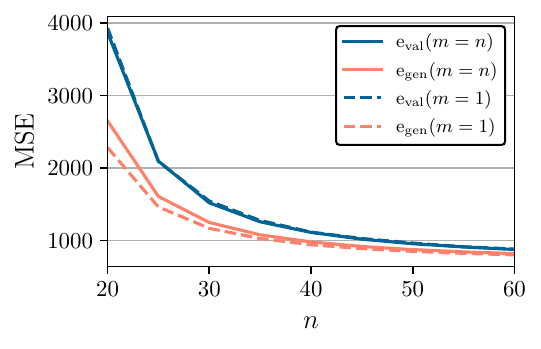}
    \includegraphics[width=0.5\linewidth]{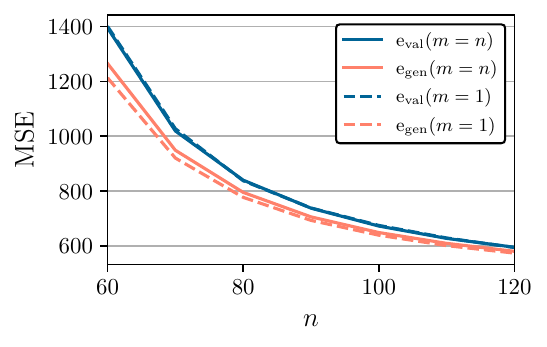}
    \caption{Validation and generalization errors for data sampled from the {\bf superconductivity} dataset, averaged over
one million repetitions. $n=21263$,\ \  $p=82$. (left) 
$K=10$; (right) $K=30$.}
    \label{fig:realdatasets}
\end{figure}

\subsection{Simulation study}

In Figure~\ref{fig:variance_selected_linear_regression} we compare the validation and generalization error of the above model for several choices of the parameters.
These results were obtained by a simulation and averaged over 100,000 runs.
Error bars were omitted since the uncertainty is negligible. 
Here we show the results for a moderate level of noise $\eta=1$.
We also tested other noise levels, however, these experiments resulted in qualitatively similar plots and so we chose to omit them.
We end this section with several remarks regarding the figures above:
\begin{remark}[sign of the bias]
    In these experiments, the validation error is larger than the generalization error, hence the bias is positive.
    While this may seem counter-intuitive, it is in fact a consequence of the excess variance in the validation error, which results from selecting the high-variance features on the entire data. See our analytical derivation for a simplified setting in the next subsection. 
    In general, the bias can be either negative or positive.
    In Section \ref{sec:categorical} we even show an example where the bias flips sign as the noise level is increased.
\end{remark}
\begin{remark}[asymptotics]
    In Figure~\ref{fig:variance_selected_linear_regression}, the bias vanishes in the regime where $p$ is fixed and $n \to \infty$.
    In the next subsection, we  prove this claim rigorously for a simplified model of variance-based feature selection followed by linear regression.
    However, this is not true for all preprocessing procedures as we later show in Section~\ref{sec:pathological}.
\end{remark}

\subsection{Experiments on a real dataset}

In addition to synthetic data, in Figure \ref{fig:realdatasets} we show results on a real dataset, the superconductivity
dataset from the UCI repository \citep{uci}.
This dataset contains $82$ chemical properties for a large collection of superconductors, e.g. their atomic mass, thermal
conductivity, etc.
The goal is to predict the critical temperature at which a superconductor becomes superconductive \citep{Hamidieh2018}.
The sampling distribution used in our simulations is defined by taking random subsets of chemicals from the standardized
superconductivity training set without replacement.

\begin{remark}[stability of the preprocessing procedure]
    It may be the case that, on a given dataset, the scale differences between the features are so large that the same set
of high-variance features is selected almost every single time. In that case, the feature-selecting transformation $\hatT$
is almost independent of the dataset. Therefore, we can expect the bias to be close to zero (see Section \ref{sec:basic_properties}).
This is indeed the case on the unstandardized superconductivity dataset. For example, selecting a random subset of 20 chemicals
and then picking the 10 features with highest variance yielded the exact same set of features 995 times out of 1,000 runs.
\end{remark}

\subsection{Analysis} \label{sec:feature_selected_linreg_analysis}

To understand where the bias is coming from, we consider the simple noiseless setting with i.i.d. Gaussian covariates and a single selected variable.
In our notation this means $K=1, M=\eta=0$.
Figure \ref{fig:variable_selected_regression_K1} compares the average validation and generalization errors for $p=50$ features and training set sizes $n=5,10,\ldots,40$.
We see that even this stripped-down model shows a large gap between the validation and generalization error.

\begin{figure}
    \includegraphics[width=0.5\linewidth]{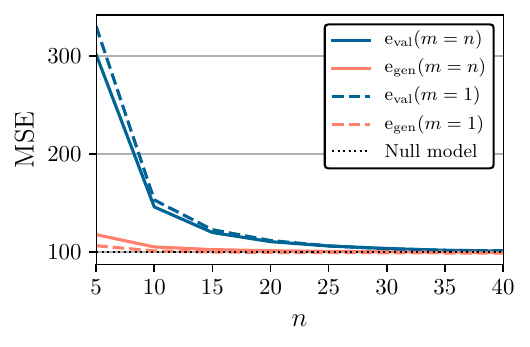}
    \includegraphics[width=0.5\linewidth]{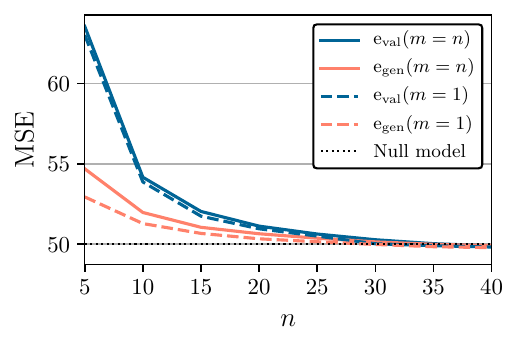}
    \caption{Validation and generalization errors for the theoretically analyzed setup in Section \ref{sec:feature_selected_linreg_analysis},
of selecting the single feature with largest sum of squares out of $p=50$ features, followed by linear regression.
    (left) $x_{ij} \sim t(4)$; (right) $x_{ij} \sim \mathcal{N}(0,1)$.} \label{fig:variable_selected_regression_K1}
\end{figure}

To simplify the analysis, rather than variance-based variable selection, we consider selecting the variable with the largest squared norm (sum of squares).
Since the covariates are drawn from a zero-mean distribution, the averaged sum of squares is a consistent estimator of the variance.
While not shown here, the plots in Figure~\ref{fig:variable_selected_regression_K1} for both variable-based and norm-based feature filtering appear indistinguishable. 
Let $X \in \mathbb{R}^{(n+m) \times p}$ be the matrix of feature vectors, where the first $n$ rows are the feature vectors of the training sample and the rows $n+1, \ldots, n+m$ contain the feature vectors of the validation sample.
Let $X_{*j} = (X_{1,j}, \ldots, X_{n,j})^T$ be the j\textsuperscript{th} feature across all training samples.
We denote the normalized dot-product (correlation) between the j\textsuperscript{th} and k\textsuperscript{th} training feature
\begin{align}
    \hat{\rho}_{jk} = \frac{X_{*j}^T X_{*k}}{\|X_{*j}\|\|X_{*k}\|}.
\end{align}
We now derive an expression for the bias due to preliminary variable selection.
\begin{theorem} \label{prop:var_selected_linreg}
    Let $\widehat{j}$ be the maximizer of $\sum_{i=1}^{n+m} X_{i,j}^2$ and let \(j_o\) be any other column (they are exchangeable).
    For the model described above with $K=1, M=\eta=0$,
    the bias of the MSE due to the feature selection is
   \begin{align} \label{eq:bias_var_selected_linreg}
        \bias
        =
        \E \left[ \left\{ (p-1) \hat{\rho}_{\widehat{j}j_o}^2 \frac{\|X_{*,j_o}\|^2}{\|X_{*, \widehat{j}}\|^2} - 1\right\}
     \left( X_{n+1,\widehat{j}}^2 - 1 \right) \right].
    \end{align}
\end{theorem}
The proofs of the theorem and following corollary are in the appendix. 
\begin{corollary} \label{cor:var_selected_linreg}
    From Eq. \eqref{eq:bias_var_selected_linreg} we can infer the following asymptotic results:
    \begin{enumerate}
        \item In the  classical regime where $p$ is fixed and $n \to \infty$, it follows  that \( \bias \to 0 \).
        \item If we fix $n,m$ and take $p \to \infty$ then \(\frac{\bias}{p/n} \to 1 \) and in particular $\bias \to \infty$.
    \end{enumerate}
\end{corollary}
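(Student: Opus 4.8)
The plan is to work directly from the closed form \eqref{eq:bias_var_selected_linreg}, exploiting two structural features of the Gaussian design: the rotational invariance of each column and the exchangeability of the $p-1$ non-selected columns. The starting point is that the selection rule depends on the data only through the column sums of squares $c_j := \sum_{i=1}^{n+1} X_{i,j}^2$; conditionally on $(c_1,\dots,c_p)$ the direction of every column is uniform on its sphere and independent across columns, so in particular the selection event is independent of all these directions. Writing the selected (first) column as $\sqrt{c_1}\,v$ with $v$ uniform on the unit sphere of $\mathbb{R}^{n+1}$, I would record the identities $X_{n+1,1}^2 = c_1 v_{n+1}^2$ and $\|X_{*1}\|^2 = c_1(1-v_{n+1}^2)$, where $v_{n+1}$ and the normalized training direction $u := X_{*1}/\|X_{*1}\|$ (uniform on the sphere of $\mathbb{R}^n$) are independent of $c_1$ and of the selection event. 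Here $v_{n+1}^2 \sim \mathrm{Beta}(1/2,n/2)$, equivalently $v_{n+1}^2/(1-v_{n+1}^2) \overset{d}{=} \chi_1^2/\chi_n^2$, which renders both random magnitudes in the bias tractable.

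The next step removes the representative column $2$. Since $\hat{\rho}_{1,2}^2\,\|X_{*2}\|^2/\|X_{*1}\|^2 = (u^T X_{*2})^2/\|X_{*1}\|^2$, exchangeability lets me replace $(p-1)(u^T X_{*2})^2$ by $\sum_{j\ge 2}(u^T X_{*j})^2$ inside the expectation. Conditioning on the first column fixes $u$ and the threshold $c_1$, and the remaining columns are then i.i.d.\ standard Gaussian truncated to $\{c_j < c_1\}$; a further rotational-invariance argument yields $\E[(u^T X_{*j})^2 \mid \text{selection}] = \E[c_2 \mid c_2 < c_1]/(n+1)$. After these two reductions the bias is an expectation of explicit functions of the single random norm $c_1$, the Beta variable $v_{n+1}^2$, and the truncated moment $\E[c_2\mid c_2<c_1]$.

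It then remains to pass to the two limits. When $p$ is fixed and $n\to\infty$, the prefactor $(p-1)/\|X_{*1}\|^2 = O(p/n)$ vanishes, the truncated moment satisfies $\E[c_2\mid c_2<c_1]/(n+1)\to 1$, and the selection effect on the validation coordinate dies out, $\E[X_{n+1,1}^2]-1 = \E[c_1]/(n+1)-1 \to 0$; together these give $\bias\to 0$. When $n$ is fixed and $p\to\infty$, the maximum $c_1$ of $p$ i.i.d.\ $\chi^2_{n+1}$ variables diverges (at rate $2\log p$), so the truncation on the non-selected columns becomes asymptotically vacuous, $\E[c_2\mid c_2<c_1]\to n+1$, and every term carrying a factor $1/c_1$ is washed out. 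What survives is of order $p$ and is governed by the moment $\E[v_{n+1}^2/(1-v_{n+1}^2)]=\E[\chi_1^2/\chi_n^2]$, while the validation correction $\E[X_{n+1,1}^2]-1 = O(\log p)$ is of lower order; collecting these terms exhibits the linear-in-$p$ growth of $\bias$ at the scale $p/n$, which in particular gives $\bias\to\infty$, and evaluating the governing moment pins down the limit of $\bias/(p/n)$.

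I expect the main obstacle to be rigorously justifying the interchange of limit and expectation in the second regime. Because the surviving term is multiplied by the diverging factor $p-1$, any $o(1)$ slack in the pointwise convergences $\E[c_2\mid c_2<c_1]\to n+1$ and in the vanishing of the truncation effect could be amplified into an $O(p)$ error. A uniform-integrability or dominated-convergence bound is therefore needed, controlling the discrepancy between the truncated and untruncated column moments, integrated against the law of the maximum $c_1$, at a rate faster than $1/p$. Securing this bound is the crux of the argument, whereas the residual Beta and inverse-$\chi^2$ moment evaluations are routine.
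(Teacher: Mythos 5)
You have essentially rediscovered the paper's own argument in a sharpened form. The paper also represents the selected column as a norm times a uniform direction (its $X_{n+1,1}=R_n S_n$ with $(S_n+1)/2\sim\mathrm{Beta}(n/2,n/2)$ is exactly your $(c_1, v_{n+1})$ decomposition, since $R_n^2=c_1$ and $S_n=v_{n+1}$), exploits the same exchangeability of the $p-1$ unselected columns, and reduces the dominant term to products of Beta and chi-square moments; where you compute the exact truncated moment $\E[c_2\mid c_2<c_1]/(n+1)$, the paper instead uses the exact identity $\E[\hat{\rho}_{1,2}^2]=1/n$ together with concentration claims. Your regime-1 sketch is fine, and in fact cleaner than the paper's, which deduces $\bias\to 0$ from convergence in distribution without addressing uniform integrability. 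However, the crux you flag is misplaced: since the target is the ratio $\bias/(p/n)$, only multiplicative $(1+o(1))$ control is needed, and $\E[c_2\mid c_2<c_1]$ is monotone in $c_1$ and bounded above by $n+1$, so dominated convergence is immediate --- the factor $p-1$ multiplies a relative error, not an absolute one. The integrability caveat you should have flagged instead is that your governing moments, $\E\bigl[v_{n+1}^2/(1-v_{n+1}^2)\bigr]=\E[\chi_1^2]\,\E[1/\chi_n^2]=1/(n-2)$ and the inverse-Beta moment $\E[1/(1-v_{n+1}^2)]=(n-1)/(n-2)$ appearing in the subtracted term, exist only for $n\ge 3$.

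More importantly, carry your computation to the end: the dominant term is $\frac{p-1}{n+1}\,\E\bigl[\E[c_2\mid c_2<c_1]\bigr]\cdot\E\bigl[v_{n+1}^2/(1-v_{n+1}^2)\bigr]\to\frac{p-1}{n-2}$, while the subtracted term is $O(p/\log p)$ and the validation correction is $O(\log p)$; so your route yields $\bias/(p/n)\to n/(n-2)$ for fixed $n\ge 3$, \emph{not} $1$. This mismatch is real and exposes a looseness in the paper's own proof of part 2: the paper reaches the constant $1$ via the claims $\|X_{*2}\|^2/n\convprob 1$ and $\|X_{*1}\|^2/R_n^2\convprob 1$ (equivalently $1-v_{n+1}^2\approx 1$) and by treating $nS_n^2$ as $\chi^2(1)$, all of which hold only as $n\to\infty$; for fixed $n$ the limit law of $\frac{\|X_{*2}\|^2}{\|X_{*1}\|^2}X_{n+1,1}^2$ is a product of independent $\chi_n^2$ and $\chi_1^2/\chi_n^2$ factors with mean $n/(n-2)$, not $\chi^2(1)$ with mean $1$. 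So you must either adopt the paper's large-$n$ approximations (in which case the stated limit $1$ holds only up to a $1+O(1/n)$ factor, i.e., in a joint regime where $n$ also grows) or keep your exact route and report the fixed-$n$ constant $n/(n-2)$. As written, your closing assertion that evaluating the governing moment ``pins down the limit'' as claimed is the one step that would fail; everything else in your plan is sound and reproduces both qualitative conclusions ($\bias\to 0$ in regime 1, and $\bias\asymp p/n\to\infty$ in regime 2).
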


\section{Additional examples} \label{sec:additional_examples}
\subsection{Grouping of rare categories} \label{sec:categorical}

Categorical covariates are common in many real-world prediction tasks.
Such covariates often have long-tailed distributions with many rare categories. This creates a problem since there is no
way to accurately estimate
the responses associated with the rare categories.
One common solution to this problem is to preprocess the data by grouping categories that have only a few observations
into a \emph{rare}
category.
See for example \citep[Section 4.1]{Harrell2015} and \citep[Section 15.5]{WickhamGrolemund2017}.

In this section, we analyze this type of grouping prior to a  simple regression problem of estimating a response given only the category of the observation.
We show that if the grouping of rare categories is done on the entire data set before the train-validation split then the validation error is biased
with respect to
the generalization error.

\paragraph{Sampling distribution:}
We draw mean category responses $\mu_1, \ldots, \mu_C$ independently from $\mathcal{N}(0,1)$.
To generate an observation $(x,y)$ we  draw $x$ uniformly from  $\{1, \ldots, C\}$
and then set $y$ to be a noisy measurement of that category's mean response.
\begin{align}
    x \sim \mathcal U\{1, \ldots, C\},
    \qquad
    y \sim N(\mu_x, \sigma^2).
\end{align}

\paragraph{Preprocessing:}
We group  all categories that appear less than $M$ times in the union of the training and validation sets into a \textit{rare} category.

\paragraph{Predictor:}
Let $Y(k) := \left\{ y_i : x_i = k \text{ for } i=1,\ldots,n \right\}$
be the set of sampled responses for category $k$ in the training set. The predicted response is
\begin{align}
    \hat{f}(k) = 
    \begin{cases}
        \text{mean}\{Y(k)\} & \text{if } |Y(k)| \ge 1 \text{ and } k \text{ is not } rare \\
        0 & \text{otherwise}.
    \end{cases}
\end{align}
To simplify the analysis, we choose to set the 
estimated response of the rare category to zero, rather than to the mean of its responses,
which is zero in expectation.

\subsubsection{Analysis}

Let $x_1, \ldots, x_{n+m}$ be the categories in our sample
where the first $n$ belong to the training set and the rest to the validation set.
Denote by $\#_q(k) = \sum_{i=1}^q \mathbbm{1}(x_i=k)$
the number of appearances of a category $k$ among the first $q$ observations in $x_1, \ldots, x_{n+m}$.
Denote by $r_k$ the event that $\#_{n+m}(k) < M$, i.e. that the category k is determined \emph{rare},
and by
\begin{align}
    p_i(k) := \pr{\#_n(k)=i|\neg r_k}
\end{align}
the probability of having exactly
$i$ observations of category $k$ in the training set, given that this category is not rare.
\begin{theorem} \label{prop:rare_categories_MSE}
    The Mean Squared Error (MSE) for an estimate of an observation from category $k$ with category mean $\mu_k$ is 
    \begin{align} \label{eq:MSE_k}
        \sigma^2
        +
        \pr{r_k} \mu_k^2 + \pr{\neg r_k} p_0(k) \mu_k^2
        +
        \sigma^2 \pr{\neg r_k} \sum_{i=1}^{n} \frac{p_i(k)}{i}.
    \end{align}
\end{theorem}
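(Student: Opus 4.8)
The plan is to evaluate this conditional generalization MSE by the law of total expectation, organised around the two sources of randomness in the problem: the category labels $x_1, \ldots, x_{n+m}$ and the independent response noise. Throughout, $\mu_k$ is treated as fixed (we condition on the category mean), and I write the fresh observation as $y = \mu_k + \epsilon$ with $\epsilon \sim N(0, \sigma^2)$ drawn independently of the training sample. First I would split off the irreducible noise:
\[
    \expect{(y - \hat{f}(k))^2} = \expect{(\mu_k - \hat{f}(k))^2} + \sigma^2,
\]
since $\epsilon$ has mean zero and is independent of $\hat{f}(k)$, so the cross term vanishes and $\expect{\epsilon^2} = \sigma^2$. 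This produces the leading $\sigma^2$ term and reduces the problem to computing $\expect{(\mu_k - \hat{f}(k))^2}$.

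Next I would condition on the rareness event $r_k$, which by definition is a function of the category labels alone. On $r_k$ the predictor is forced to $\hat{f}(k) = 0$, contributing $\pr{r_k}\mu_k^2$; this is the second term. On $\neg r_k$ I would condition further on the training count $i = \#_n(k)$, whose conditional law is exactly $p_i(k)$. The slice $i = 0$ again forces $\hat{f}(k) = 0$ and contributes $\pr{\neg r_k}\,p_0(k)\,\mu_k^2$, the third term. For $i \ge 1$ the predictor is the average of the $i$ training responses in category $k$; writing each as $\mu_k + \epsilon_j$ with i.i.d.\ $\epsilon_j \sim N(0,\sigma^2)$ gives $\hat{f}(k) = \mu_k + \frac{1}{i}\sum_j \epsilon_j$, so the conditional expectation of $(\mu_k - \hat{f}(k))^2$ is $\sigma^2/i$. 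Weighting by $p_i(k)$, summing over $i \ge 1$, and multiplying by $\pr{\neg r_k}$ yields the final term $\sigma^2\,\pr{\neg r_k}\sum_{i=1}^{n} p_i(k)/i$. Collecting the four pieces gives \eqref{eq:MSE_k}.

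The step requiring the most care is the last one, where I must justify that conditioning on the category-label events does not disturb the noise-variance calculation. The delicate point is that $r_k$ depends on the \emph{full} sample through $\#_{n+m}(k)$, while $\hat{f}(k)$ is built only from training responses; a priori one might worry that conditioning on $\neg r_k$ together with $\#_n(k) = i$ reshapes the distribution of those $i$ responses. I would resolve this by noting that the response noise is independent of all category labels, so conditioning on any label configuration with $\#_n(k) = i$ leaves the $i$ relevant responses distributed as i.i.d.\ $N(\mu_k, \sigma^2)$; the conditional variance $\sigma^2/i$ is therefore genuinely free of the label event, and the total-expectation bookkeeping goes through cleanly.
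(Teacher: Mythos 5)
Your proof is correct and takes essentially the same route as the paper's: a law-of-total-expectation decomposition over the rareness event and the training count, whose three slices ($r_k$; $\neg r_k$ with $\#_n(k)=0$; $\neg r_k$ with $\#_n(k)=i\ge 1$, where the predictor is a mean of $i$ i.i.d.\ $N(\mu_k,\sigma^2)$ responses) are exactly the paper's Cases 1--3. Splitting off the irreducible $\sigma^2$ up front, and your explicit verification that conditioning on label events leaves the response noise untouched, are only cosmetic refinements of the argument the paper states more tersely.
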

See the appendix for the proof.
At first, it may seem that the MSE should be the same for samples in the validation
set as for newly generated samples.
However, the probabilities $\pr{r_k}$ and $p_0(k), \ldots, p_{n}(k)$ are different in the two cases.
This stems from the fact that whenever we consider an observation from the validation set, we are guaranteed
that its category appears at least once in the data set.
For example, consider the case of an $m=1$ sized validation set, as in leave-one-out cross-validation, and
let the rare category cut-off be $M=2$.
Note that $p_0(k) = 0$ in this case, hence the third term of \eqref{eq:MSE_k} vanishes.
We are left with the following MSE for an observation of category $k$,
\begin{align} \label{eq:MSE_k_M2}
    \sigma^2 + \pr{r_k} \mu_k^2 + \sigma^2 \pr{\neg r_k} \sum_{i=1}^{n} \frac{p_i(k)}{i}.
\end{align}
Since the validation set contains a single observation $(x_{n+1}, y_{n+1})$ and since $M=2$,
given that $x_{n+1}=k$,
the category $k$ will be considered \emph{rare} if and only if the training set contains exactly zero observations of it.
Hence,
\begin{align} \label{eq:pr_r_x_i}
    \pr{r_{x_{n+1}}} = \pr{\text{Bin}(n,1/C) = 0} = \left( 1 - \frac{1}{C} \right)^n.
\end{align}
In contrast, the category of a newly generated observation $(x,y)$ will be \emph{rare} if and only if the training and
validation sets contain
\emph{zero or one} observations of it. Hence,
\begin{align} \label{eq:pr_r_x}
    \pr{r_{x}} = \pr{\text{Bin}(n+1,1/C) \le 1} = \left(1 - \frac{1}{C} \right)^{n+1} + \frac{n+1}{C} \left(1 - \frac{1}{C} \right)^n.
\end{align}
Let us denote
\(
    A(k) = \pr{\neg r_k} \sum_{i=1}^{n} \frac{p_i(k)}{i}.
\)
Since the categories are drawn uniformly, $A(k)$ does not depend on the specific category $k$,
but does depend on whether or not it is in the validation set. 
Since $\expect{\mu_k^2} = 1$,
it follows from \eqref{eq:MSE_k_M2} that for $M=2$,
the bias as given by Eq. \eqref{eq:bias_symm_T} satisfies
\begin{align}
        \bias
        &=
        \E \left(
            \ell [y_{n+1}, \hat{f}\{\hatT(x_{n+1})\}]
            -
            \ell[y, \hat{f}\{\hatT(x)\}]
        \right) \\
        &=
        \pr{r_{x_{n+1}}} - \pr{r_{x}}
        +
        \sigma^2
        \left(
            A(x_{n+1})
            -
            A(x)
        \right).
\end{align}
For the noiseless case $\sigma = 0$, we obtain
\begin{align}
    \bias = \pr{r_{x_{n+1}}} - \pr{r_{x}} = -\frac{n}{C} \left( 1 - \frac{1}{C} \right)^n \approx -\frac{n}{C} \exp(-n/C).
\end{align}
We see that in the noiseless setting, with the rare cutoff at $M=2$ and using leave-one-out cross-validation, the bias is
 always negative. Note that even in this simple case the bias is not a monotone function of the training set size.
Note also that the bias vanishes as $n/C \to \infty$. This is expected since in this regime there are no
rare categories.

\begin{figure}[t]
    \includegraphics[width=0.5\linewidth]{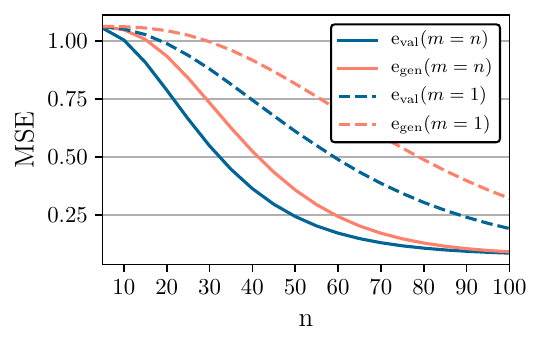}
    \includegraphics[width=0.5\linewidth]{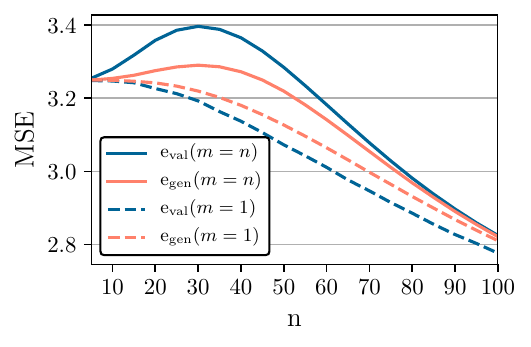}
    \caption{Validation and generalization errors of preliminary grouping with rare categories. Points in these plots are
the average of 10,000,000 runs. The number of categories is $C=20$ and the rare category cutoff is $M=4$. Error bars were
omitted as the uncertainty is negligible. $n$ is the number of training samples. Solid lines correspond to a validation set of size $m = n$. Dashed lines correspond to a validation set of size one. (left panel) low noise $\sigma = 0.25$; (right panel)
high noise  $\sigma = 1.5$.}
    \label{fig:categorical}
\end{figure}

\subsubsection{Simulation study}
Larger values of the category cutoff $M$ are more cumbersome to analyze mathematically but can be easily handled via simulation.
We present one such simulation for $C=20$ categories and $M=4$ in Figure \ref{fig:categorical}, where the empirical average
of $\errval$ and
$\errgen$ are plotted
for various training set sizes $n=5,10,\ldots,100$,  once with $m=n$ validation samples and once with $m=1$ validation samples.
Note that the bias, which corresponds to the difference between the blue and red lines, may be either negative or positive,
depending on the noise level and the validation set size.

\subsection{Rescaling prior to Lasso linear regression} \label{sec:rescaled_lasso}

The Lasso is a popular technique for regression with implicit feature selection \citep{Tibshirani1996}.
In this section, we demonstrate that rescaling the set of feature vectors $\{\x_1, \ldots, \x_{n+m}\}$ prior to
the train-validation split so that each feature has variance one, may bias the validation error with respect to the generalization error.

\paragraph{Sampling distribution:}
First we generate a random vector of coefficients $\boldsymbol{\beta} = (\beta_1, \ldots, \beta_p)^T$ where $\beta_i \sim \mathcal{N}(0,1)$.
Then we draw each observation $(\x,y)$ in the following manner,
\begin{align}
    \x &\sim \mathcal{N}(0, I_{p \times p}), \label{eq:gaussiandesign}
    \qquad
    y = \x \boldsymbol{\beta} + \epsilon
    \qquad
    \text{where } \epsilon \sim \mathcal{N}(0,\sigma^2).
\end{align}
\paragraph{Preprocessing:}
We estimate the variance of the $j$\textsuperscript{th} coordinate vector\\ $\{x_{1,j}, \ldots, x_{n+m,j}\} \in \mathbb{R}^{n+m}$ as follows,
\begin{align} \label{eq:rl_variance_estimator}
    \hat{\sigma}_j^2 := \frac{1}{n+m} \sum_{i=1}^{n+m} x_{i,j}^2.
\end{align}
Then we rescale the $j$\textsuperscript{th} coordinate of every covariate  by $\hat{\sigma}_j$,
\(
    \hatT(\x) := (x_1 / \hat{\sigma}_1, \ldots, x_p / \hat{\sigma}_p). 
\)
We use the estimate in Eq. \eqref{eq:rl_variance_estimator} because it is easier to analyze mathematically than the standard variance estimate,
but gives similar results in simulations.
\paragraph{Predictor:}
The predictor is $\hat{f}(\hatT(\x)) = \hatT(\x) \hat{\boldsymbol{\beta}}^{\text{Lasso}}$ where the coefficients vector is obtained by Lasso linear regression,
\begin{align}
    \hat{\boldsymbol{\beta}}^{\text{Lasso}}
    &:=
    \argmin_{\boldsymbol{\beta} \in \mathbb{R}^p} \frac{1}{2} \| Y - X \boldsymbol{\beta} \|^2 + \lambda \|\boldsymbol{\beta}\|_1 .
\end{align}
Here, $X_{n \times p}$ is the design matrix with rows comprised of the rescaled training covariates $\hatT(\x_1), \ldots, \hatT(\x_n)$,
the responses vector is $Y = (y_1, \ldots, y_n)^T$ and $\lambda>0$ is a constant that controls the regularization strength.

\subsubsection{Simulation study}

\begin{figure}
    \includegraphics[width=0.5\linewidth]{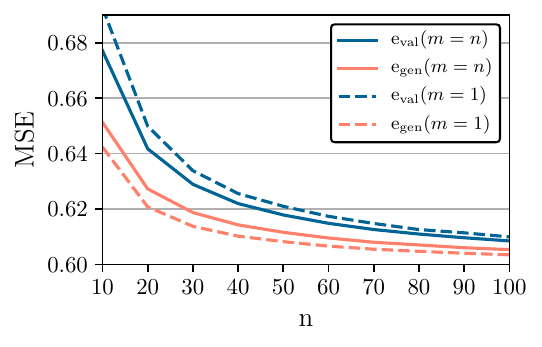}
    \includegraphics[width=0.5\linewidth]{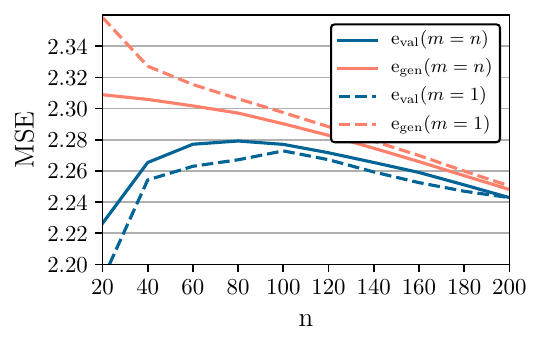}
    \caption{Validation and generalization errors of the rescaled Lasso.
    Solid lines correspond to a validation set of size $m=n$ whereas dashed lines correspond to a singleton validation set.
    (left) Low-dimensional setting, $p=5, \lambda=0.5, \sigma=0.1$, average of 10,000,000 runs;\qquad\qquad\qquad
    (right) High-dimensional setting, $p=10,000, \lambda=0.1, \sigma=1$, average of 1,000,000 runs.
    \label{fig:rescaled_lasso}}
\end{figure}

Figure \ref{fig:rescaled_lasso} shows averaged validation and generalization errors of 
both high-dimensional and low-dimensional Lasso linear regression with preliminary rescaling.
Error bars were omitted as the uncertainty is negligible.
Note  that using $m=1$ validation samples incurs a larger absolute bias than $m=n$ samples.
This observation agrees with our analysis in the next section, since the correlation between $x_{n+1,1}^2$ and $\hat{\sigma}_1$
is stronger when the there is just a single validation data point.

\subsubsection{Analysis}

The Lasso is difficult to analyze theoretically since a closed form expression is not available in the general case.
In order to gain insight, we consider instead a simplified Lasso procedure which performs $p$ separate one-dimensional Lasso regressions,
\begin{align} \label{eq:simplified_lasso}
    \hat{\beta}^{\text{SL}}_j
    &=
    \argmin_{\beta \in \mathbb{R}} \frac{1}{2} \sum_{i=1}^n \left( Y_i - \beta X_{i,j} \right)^2 + \lambda |\beta|
.
\end{align}
The resulting regression coefficients are a soft-threshold applied to the coefficients of simple linear regression.
With this simplification, we are able to analyze the bias.
\begin{theorem} \label{prop:rescaled_lasso_biasp}
    Let clip$_a(z) = \text{max}(\text{min}(z,a),-a)$ denote the truncation of $z$ to the interval $[-a,+a]$.
    Under the simplifying assumptions of zero noise,
    the rescaled simplified Lasso of Eq. \eqref{eq:simplified_lasso} with sampling distribution defined above has the following bias due to the feature rescaling, 
    \begin{align} \label{eq:bias_lasso_orthogonal}
        \bias = p \cdot \E\mathrm{Cov}(\text{clip}_{\lambda \hat{\sigma}_1/n}^2(\beta_1), x_{n+1,1}^2).
    \end{align}
    where $\beta_1$ is the true regression coefficient of the first dimension.
\end{theorem}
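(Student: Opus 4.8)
The plan is to begin from the symmetric-transformation identity \eqref{eq:bias_symm_T}, which expresses $\bias$ as the expected difference between the loss at the validation point $\x_{n+1}$ and the loss at a fresh draw $(\x,y)\sim\mathcal D$. Because the noise is zero, every response is exactly $y=\x\boldsymbol{\beta}$, and the rescaled simplified-Lasso prediction is $\hat{f}(\hatT(\x))=\sum_{j=1}^p x_j\,\hat{\beta}^{\text{SL}}_j/\hat{\sigma}_j$. Hence for any evaluation input the residual splits across coordinates as $y-\hat{f}(\hatT(\x))=\sum_{j=1}^p x_j\gamma_j$, where I set $\gamma_j:=\beta_j-\hat{\beta}^{\text{SL}}_j/\hat{\sigma}_j$; crucially $\gamma_j$ is a function of $\boldsymbol{\beta}$ and the design only, not of the point at which we evaluate. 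Squaring yields the double sum $\sum_{j,k}x_jx_k\gamma_j\gamma_k$, so the entire computation reduces to comparing this quantity at $\x_{n+1}$ and at an independent fresh $\x$.

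For the generalization term I would average over the fresh $\x\sim\mathcal N(0,I)$, which is independent of every $\gamma_j$; since $\E[x_jx_k]=\delta_{jk}$, all cross terms drop and $\E[\errgen]=\sum_j\E[\gamma_j^2]$. The validation term is the heart of the matter: $x_{n+1,j}$ is \emph{not} independent of $\gamma_j$, because $\hat{\sigma}_j$ is formed from the union of the training and validation sets and therefore contains $x_{n+1,j}$. The key structural observation is that $\gamma_j$ depends on the validation point only through the squared coordinates $x_{n+1,\cdot}^2$ (inside $\hat{\sigma}$), whereas the univariate fits entering $\hat{\beta}^{\text{SL}}$ use training data alone. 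This lets me eliminate the off-diagonal validation terms by a sign-flip symmetry: negating a single validation coordinate $x_{n+1,k}$ preserves the joint law (it enters everything else only through its square) while flipping the sign of $x_{n+1,j}x_{n+1,k}\gamma_j\gamma_k$ for $j\neq k$, so its expectation vanishes. Consequently $\E[\errval]=\sum_j\E[x_{n+1,j}^2\gamma_j^2]$, and subtracting the two sides, using $\E[x_{n+1,j}^2]=1$,
\begin{align}
\bias=\sum_{j=1}^p\mathrm{Cov}\!\left(x_{n+1,j}^2,\gamma_j^2\right)=p\,\mathrm{Cov}\!\left(x_{n+1,1}^2,\gamma_1^2\right),
\end{align}
the last step by exchangeability of the $p$ coordinates.

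It then remains to identify $\gamma_1$ with a clip of $\beta_1$. I would write the one-dimensional Lasso in closed form: $\hat{\beta}^{\text{SL}}_1$ is the soft-threshold $S_t(z)=\mathrm{sign}(z)\max(|z|-t,0)$ of the univariate least-squares slope, and after dividing by $\hat{\sigma}_1$ and cancelling the rescaling factors one obtains $\gamma_1=\beta_1-S_{t_1}(\beta_1)$ with threshold $t_1=\lambda\hat{\sigma}_1/\sum_{i=1}^n x_{i,1}^2$. The elementary identity $z-S_t(z)=\text{clip}_t(z)$ then gives $\gamma_1=\text{clip}_{t_1}(\beta_1)$, with the training sum of squares $\sum_{i=1}^n x_{i,1}^2$ concentrating at $n$ and appearing as such in the threshold. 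Finally, since $\E[x_{n+1,1}^2\mid\Strain]=1$ independently of the training set, the between-group term in the decomposition $\mathrm{Cov}(A,B)=\E[\mathrm{Cov}(A,B\mid\Strain)]+\mathrm{Cov}(\E[A\mid\Strain],\E[B\mid\Strain])$ vanishes, which is exactly what turns the plain covariance into the stated $\E\,\mathrm{Cov}$ form.

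The main obstacle is the step that replaces the univariate least-squares slope by the true coefficient $\beta_1$. In finite samples that slope equals $\beta_1$ plus a contamination $\bigl(\sum_i x_{i,1}^2\bigr)^{-1}\sum_{k\neq1}\beta_k\sum_i x_{i,1}x_{i,k}$ arising from the non-orthogonal sample Gram matrix, and because the soft-threshold is nonlinear this term does not simply average away. I expect to discharge it by working at the population-slope level, using that $\x\sim\mathcal N(0,I)$ makes the population univariate slope exactly $\beta_1$, so that $\gamma_1$ collapses to $\text{clip}(\beta_1)$; pinning down that this orthogonalization is the intended simplification, and checking it is compatible with the $\sum_i x_{i,1}^2\approx n$ normalization in the threshold, is where the care lies. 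The remaining manipulations — the soft-threshold algebra and the covariance bookkeeping — are routine.
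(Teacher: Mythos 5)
Your proposal is correct and follows essentially the same route as the paper's proof: the per-coordinate residual identity $y-\hat f(\hatT(\x))=\sum_j x_j\,\mathrm{clip}_{\lambda\hat\sigma_j/n}(\beta_j)$ obtained from the soft-threshold/clip algebra, removal of the off-diagonal validation terms by a sign symmetry (the paper conditions on $\beta,\hat\sigma$ and uses $\E[x_{n+1,j}\mid\hat\sigma_j]=0$, which is your sign-flip argument in conditional form), and exchangeability of the $p$ coordinates to reduce everything to the first one. The single obstacle you flag is resolved in the paper exactly as you suspect: its proof silently posits an exactly orthogonal training design, $X^TX=nI$, so the univariate slopes are exactly $\hat\sigma_j\beta_j$, the contamination term you worry about never appears, and the threshold normalization $\sum_{i\le n}x_{i,1}^2=n$ holds by fiat --- an idealization the authors justify only informally after Theorem \ref{prop:rescaled_lasso_bias_negative}, by noting that for Gaussian covariates in low dimension $X^TX$ concentrates near its expectation $nI$.
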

The proofs of this theorem and the next one are included in the Appendix.
Large values of $x_{n+1,1}^2$  positively correlate with large values of $\hat{\sigma}_1$,
which positively correlate with $\text{clip}_{\lambda \hat{\sigma}_1/n}^2(\beta_1)$.
We thus expect that this covariance be positive.
This is formally proved in the following theorem.
\begin{theorem} \label{prop:rescaled_lasso_bias_negative}
    Under the assumptions of Theorem \ref{prop:rescaled_lasso_biasp},
    it follows that 
    \(
        \bias > 0
    \).
\end{theorem}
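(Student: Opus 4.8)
The plan is to start from the identity $\bias = p \cdot \E\mathrm{Cov}\big(\text{clip}_{\lambda\hat\sigma_1/n}^2(\beta_1), x_{n+1,1}^2\big)$ provided by Theorem~\ref{prop:rescaled_lasso_biasp} and show that the covariance factor is strictly positive; since $p>0$, this gives $\bias>0$ at once. The first step is to put the clipped square into a monotone form. Because $\text{clip}_a(z)=\mathrm{sign}(z)\min(|z|,a)$ for $a\ge 0$, we have $\text{clip}_{\lambda\hat\sigma_1/n}^2(\beta_1)=\min\!\big(\beta_1^2,(\lambda\hat\sigma_1/n)^2\big)$, and since $\hat\sigma_1^2=\frac{1}{n+m}\sum_{i=1}^{n+m}x_{i,1}^2$ is increasing in $x_{n+1,1}^2$, this quantity is a non-decreasing function of $x_{n+1,1}^2$ once the remaining column entries $\{x_{i,1}:i\ne n+1\}$ and the coefficient $\beta_1$ are held fixed.

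Next I would exploit independence through the law of total covariance, conditioning on everything except the single validation entry $x_{n+1,1}$. Writing $g:=\min\!\big(\beta_1^2,(\lambda\hat\sigma_1/n)^2\big)$ and letting $W$ denote the information in $\beta_1, x_{1,1},\ldots,x_{n,1},x_{n+2,1},\ldots,x_{n+m,1}$, the decomposition reads
\begin{align*}
\mathrm{Cov}\big(g, x_{n+1,1}^2\big)
&= \E\big[\mathrm{Cov}(g, x_{n+1,1}^2 \mid W)\big] \\
&\quad + \mathrm{Cov}\big(\E[g\mid W],\, \E[x_{n+1,1}^2 \mid W]\big).
\end{align*}
The cross term vanishes: since $x_{n+1,1}$ is independent of $W$ and standard normal, $\E[x_{n+1,1}^2\mid W]=1$ is constant, so its covariance with anything is zero. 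It then remains only to control the conditional covariance. Conditional on $W$, both $g$ and $x_{n+1,1}^2$ are non-decreasing functions of the single real random variable $x_{n+1,1}^2$, so by Chebyshev's sum inequality (equivalently Harris' inequality for monotone functions of one variable) the conditional covariance is non-negative, which already yields $\mathrm{Cov}(g,x_{n+1,1}^2)\ge 0$. This argument is robust to however the outer $\E$ and inner $\mathrm{Cov}$ of Theorem~\ref{prop:rescaled_lasso_biasp} are apportioned between $\beta_1$ and the feature column, since $x_{n+1,1}$ is independent of all other randomness in the problem.

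The final and most delicate step is to upgrade non-negativity to a strict inequality, and I expect this to be the \emph{main obstacle}. The conditional covariance is strictly positive precisely when $g$ genuinely varies with $x_{n+1,1}^2$, i.e. when the truncation is active with positive probability. Writing $c=\lambda^2/(n^2(n+m))$ and $s=\sum_{i\ne n+1}x_{i,1}^2$, we have $g=\min\!\big(\beta_1^2,\, c(x_{n+1,1}^2+s)\big)$, which is strictly increasing on the event $\{x_{n+1,1}^2 < \beta_1^2/c - s\}$. I would then argue this event has positive probability: $\beta_1\ne 0$ almost surely, $s$ is a (scaled) chi-squared variable with density positive near the origin, and $x_{n+1,1}^2$ has positive density on a neighbourhood of zero; hence $\{s<\beta_1^2/c\}$ has positive probability, and on it $g$ is non-constant in $x_{n+1,1}^2$. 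On that set the conditional covariance is strictly positive, so its expectation is strictly positive, giving $\mathrm{Cov}(g,x_{n+1,1}^2)>0$ and therefore $\bias>0$. The non-negativity follows mechanically from conditioning and the monotone-covariance inequality; the care lies entirely in verifying that the clip is active — and both functions therefore strictly monotone rather than flat — on a set of positive measure.
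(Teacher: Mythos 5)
Your proposal is correct, and it takes a genuinely different route from the paper's proof, even though both ultimately rest on the same one-dimensional monotone-correlation principle. The paper conditions on $Z = x_{n+1,1}^2$: it first proves a standalone lemma (via the continuous Chebyshev sum inequality) that $\E[Y\mid Z]$ monotone-increasing implies $\mathrm{Cov}(Y,Z)>0$, then verifies the monotonicity through the tail-integral representation $\E[Y\mid Z]=\int_0^\infty \pr{Y\ge t\mid Z}\,dt$, factoring $\pr{\lambda^2\hat{\sigma}_1^2/n^2\ge t \text{ and } \beta_1^2\ge t\mid Z}$ by the independence of $\beta_1$ and showing that $\pr{\hat{\sigma}_1^2\ge t\mid x_{n+1,1}^2=s}$ increases in $s$. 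You condition in the opposite direction, on $W$ (everything except $x_{n+1,1}$), and use the law of total covariance: the between-group term vanishes because $\E[x_{n+1,1}^2\mid W]=1$ is constant, and the within-group term is nonnegative because, with $W$ fixed, your identity $\text{clip}_a^2(z)=\min(z^2,a^2)$ exhibits $g=\min\bigl(\beta_1^2,\,c(x_{n+1,1}^2+s)\bigr)$ as a non-decreasing function of the single variable $x_{n+1,1}^2$. Your route buys two things: the $\min$ rewriting makes the monotone structure transparent without any tail-integral manipulation, and your treatment of strictness is actually more careful than the paper's --- the paper's lemma as stated (monotonicity of $\E[Y\mid Z]$ gives strictly positive covariance) silently requires $\E[Y\mid Z]$ to be non-constant, a point the paper does not dwell on, whereas you explicitly isolate the positive-probability event $\{s<\beta_1^2/c\}$ on which the clip is active and the conditional covariance strictly positive. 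What the paper's route buys is a reusable lemma and a direct display of the mechanism (large $x_{n+1,1}^2$ inflates $\hat{\sigma}_1$, which raises the clipping threshold); both proofs are sound.
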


To connect this analysis with our simulation results, consider the left panel of Figure~\ref{fig:rescaled_lasso}.
It shows that in the low-dimensional setting for both $m=n$ and $m=1$ the validation error is uniformly larger than the generalization error, in accordance with Theorem \ref{prop:rescaled_lasso_bias_negative}.
However, this is not true for the high-dimensional case shown in the right panel.
We can explain this by noting that the Lasso applied to an orthogonal design is nothing but a soft-threshold applied to each
linear regression coefficient \citep{Tibshirani1996}, just like the simplified Lasso we analyze here.
Recall that our covariates are Gaussian.
In the low-dimensional case, the $5 \times 5$ matrix $X^T X$ is unlikely to have large deviations from its expected value $n I_{5 \times 5}$, but this is not true for the high-dimensional case.
Thus in the context of our simulations, the simplified Lasso model may serve as a reasonable approximation to the full Lasso but only in the low-dimensional regime.

We note also that in the classic regime where the dimension $p$ is fixed and $n \to \infty$, we obtain $\hat{\sigma}_j \to 1$ for all $j$ and so from
Eq. \eqref{eq:bias_lasso_orthogonal}
we see that $\bias \to 0$.
A similar result was obtained for the feature-selected linear regression analyzed in Section~\ref{sec:feature_selected_linreg} and the categorical grouping example analyzed in Section \ref{sec:categorical}.

\section{Potential impact on  model selection} \label{sec:model_selection}

In many applications the main use of cross-validation is for model selection.
For example, to pick a regularization parameter to use on a given dataset.
This raises the question: can unsupervised preprocessing affect model selection?
In this section, we consider two prototypical model-selection pipelines:
\begin{itemize}
    \item {\bf Correct pipeline:} for each split of the dataset into a training set and a validation set, preprocess the covariates of the dataset using parameters estimated on the training set only.
Then, for each choice of the regularization parameter $\lambda$, train the predictive model on the training folds and compute its loss on the validation fold.

    \item {\bf Incorrect pipeline:} preprocess the entire dataset. For each CV split, train a predictor on the preprocessed training folds and evaluate it on the preprocessed validation fold.
\end{itemize}
In both cases, the regularization parameter $\lambda$ is picked to be the one that minimizes the mean loss across all splits. Then, to obtain the final predictive model, the covariates of the dataset are preprocessed again, this time using the entire data set, and a predictive model is retrained on the preprocessed dataset.

As we have seen in the previous sections of this paper, the incorrect pipeline outlined above can introduce biases into the validation error.
Can this result in suboptimal selection of the regularization parameter $\lambda$?
In general, we can expect that the incorrect preprocessing will alter the optimization landscape of $\lambda$ in subtle ways, thus leading to a different average performance of the resulting models.
To test whether or not this can result in a measurable performance penalty, we compared the correct and incorrect pipelines outlined above on the rescaled lasso example of Section~\ref{sec:rescaled_lasso}. 
For different choices of the training set size $N$, we used 10-fold cross-validation to pick $\lambda \in \{2^{-10}, 2^{-9}, \ldots, 2^{10}\}$ to use in the lasso regression and then estimated the generalization error of the selected model on a holdout set of size $100N$.
The plots shown in Figure~\ref{fig:model_selection} show a small penalty to the average model performance due to incorrect preprocessing.

\begin{figure}
    \includegraphics[width=0.5\linewidth]{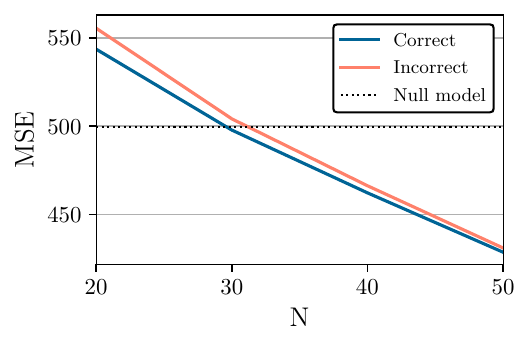}
    \includegraphics[width=0.5\linewidth]{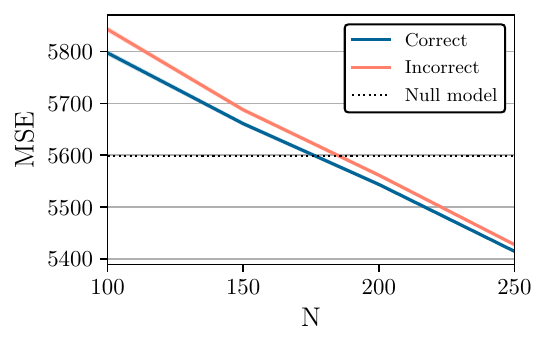}
    \caption{Generalization error of the high-dimensional normalized lasso models produced by the correct and incorrect pipelines described in Section~\ref{sec:model_selection}.
    The data generating model is $y = \x \boldsymbol{\beta} + \mathcal{N}(0, \sigma^2)$ where $\x$ and $\boldsymbol{\beta}$ are drawn from a t-distribution with 4 degrees of freedom. 
    (left) $p=100$, $\sigma=10.0$ with 100,000 repetitions;
    (right) $p=1000$, $\sigma=40.0$, with 50,000 repetitions.}
    \label{fig:model_selection}
\end{figure}

\section{On generic upper bounds of the bias} \label{sec:pathological}
In the examples described in Section \ref{sec:feature_selected_linreg} and \ref{sec:additional_examples},
the bias due to unsupervised preprocessing tends to zero as the sample size tends to infinity, provided that the rest of the model parameters are held fixed.
This suggests the following natural question: can one show that $\bias \xrightarrow{n \to \infty} 0$ for \emph{any} unsupervised preprocessing procedure and any learning algorithm?
In this section, we show that in the most general case the answer is negative.
This is done using a pathological and unnatural preprocessing procedure.
On the bright side, at the end of this section we describe how one might prove that the bias vanishes as $n \to \infty$ for more natural families of preprocessing procedures that admit a uniform consistency property.

Let $\mathcal D$ be a sampling distribution over $\mathcal X \times \mathcal Y$ with a continuous marginal ${\mathcal D}_{\mathcal
X}$.
Recall that in our framework the observations $(\x_1, y_1), \ldots, (\x_{n+m}, y_{n+m})$ are generated i.i.d. from $\mathcal D$
and then an unsupervised transformation $\hatT$ is constructed from $\x_1, \ldots, \x_{n+m}$.
We consider the following unsupervised transformation:
\begin{align} \label{eq:hatT_lowerbound_proof}
    \hatT(\x) =
    \begin{cases}
        \x & \text{if } \x \in \{\x_1, \ldots, \x_{n+m}\}, \\
        \x_0 & \text{otherwise}. \\
    \end{cases}
\end{align}
where $\x_0 \in \mathcal X$ is some point.
By design, when we apply this transformation to the feature vectors in the training and validation sets, they remain intact.
It follows that for any learning algorithm $A_f: (\mathcal X \times \mathcal Y)^n \to (\mathcal X \to \mathcal Y)$,
\begin{align} \label{eq:eval_pathological}
    \E_{S} \,\errval = \E_{\Strain \sim \mathcal D^n}  R( A_f(\Strain))
\end{align}
where $R(f) = \E_{\x,y}[\ell(f(\x), y)]$ is the risk of a prediction rule $f$.
In contrast, for new observations $(\x,y) \sim \mathcal D$, since the marginal distribution $\mathcal D_{\mathcal X}$ is
continuous we have that
\begin{align}
    \pr{\x \in \{\x_1, \ldots, \x_{n+m}\}}=0
\end{align}
and therefore,
    \begin{align} \label{eq:egen_pathological}
        \E \,\errgen
        =
        \E_{\x,y} \ \ell[y, \hat{f} \{ \hatT(\x) \})]
        =
        \E_y \ \ell[y, \hat{f}\{\x_0\}]
        \ge
        \inf_{y'} \ \E_{y} \ell(y, y'). 
    \end{align}
To summarize, the addition of this pathological  unsupervised preprocessing stage has no effect on the expected validation
error, but  the learning procedure can no longer generalize since it has degenerated into a constant predictor.

Equations \eqref{eq:eval_pathological} and \eqref{eq:egen_pathological} allows one to prove lower bounds on the absolute
bias due to unsupervised preprocessing of specific combinations of learning algorithms and sampling distributions.
For a concrete example, consider simple linear regression of noiseless data from the sampling distribution
$x \sim \mathcal{N}(0,1)$ and $y=x$.
Let $\hatT$ be the transformation in \eqref{eq:hatT_lowerbound_proof} with some $x_0$.
For any $n \ge 2$ samples, linear regression will learn the predictor $\hat{f}(x) = x$ and obtain zero loss  on the validation
set,
since for these observations we have $\hat{f}\{\hatT(x)\} = \hat{f}(x) = x = y$.
In contrast, for new observations we have $\hat{f}\{\hatT(x)\}=\hat{f}(x_0) = x_0$. Hence, for the squared loss $\ell(y,y') = (y-y')^2$
we obtain 
\begin{align}
    \E \, \errval = 0,
    \qquad
    \E \, \errgen \ge \inf_{y_0} \E (y-y_0)^2 = 1. 
\end{align}
In this case, choosing $x_0=0$ yields $\E \, \errgen =1$ so the bound is tight.

In light of the above discussion it is clear that, for general unsupervised preprocessing transformations, one cannot attain
upper bounds on $\bias$ that converge to zero as $n \to \infty$.
However, it is certainly possible to do so in more limiting scenarios.
For example,
suppose that the process of learning the unsupervised transformation $\hatT$ is consistent. i.e. there exists
some limiting oracle transformation $T_o$ such that
\begin{align}
    \sup_{\x \in \mathcal X} \|\hatT(\x) - T_o(\x)\| \xrightarrow{n+m \to \infty} 0
\end{align}
in probability.
For example, in the case of a standardizing transformation as discussed in Section \ref{sec:therightway}, the oracle transformation
is $T_o(\x) = (x - \mu)/\sigma$ where $\mu$ and $\sigma$ are the true expectation and standard deviation of the sampling
distribution covariates.
In such cases, one may view the transformed training set $\{(\hatT(\x_1), y_1), \ldots, (\hatT(\x_n), y_n)\}$ as a bounded
perturbation of the limiting set $\{ (T_o(\x_1), y_1), \ldots, (T_o(\x_n), y_n) \}$.
One can then make the connection to the algorithmic robustness literature \citep{XuCaramanisMannor2010,XuMannor2012} or similar
notions and show that a consistent preprocessing transformations followed by a robust learning algorithms guarantees that
$\bias \xrightarrow{n \to \infty} 0$.

\subsection{Upper bounds based on stability arguments}
Another  related approach for upper-bounding the bias due to preprocessing is based on the notion of stability \citep{BousquetElisseeff2002,EvgeniouEtal2004,ShalevshwartzShamirSrebroSridharan2010,HardtRechSinger2016,RussoZou2020}.
In the following we denote a sample by $S = \{(\x_1, y_1), \ldots, (\x_N, y_N)\}$ and by $S^i$  the same sample where $(\x_i, y_i)$ was replaced by $(\x_i', y_i')$.
\begin{definition}
    A learning algorithm $A:(\mathcal X \times \mathcal Y)^N \to (\mathcal X \to \mathcal Y)$ is uniformly stable with parameter $\gamma$ if for any $(\x, y), (\x', y') \in \mathcal X \times \mathcal Y$ and $S \in (\mathcal X \times \mathcal Y)^N$ 
\[
    |\ell(A_S(\x), y) - \ell(A_{S^i}(\x), y)| \le \gamma.
\]
\end{definition}
It has been shown in a series of papers that for a bounded loss function, any uniformly stable learning algorithm gives a predictor with a risk close to the average loss on the training set \citep{BousquetElisseeff2002,FeldmanVondrak2018,FeldmanVondrak2019,BousquetKlochkovZhivotovskiy2020}.

\begin{theorem}[Corollary 7 of \cite{BousquetKlochkovZhivotovskiy2020}] \label{thm:stabilityriskbound}
    For a uniformly stable learning algorithm $A$ with parameter $\gamma$ and a bounded loss function $\ell \le L$,
    for any $\delta \in (0,1)$ the following holds with probability $\ge 1-\delta$,
    \begin{align}
        |R(A_S) - R_{emp}(A_S)| \lesssim \gamma \log n \log \tfrac{1}{\delta} + \frac{L}{\sqrt{n}} \sqrt{ \log \tfrac{1}{\delta} }.
    \end{align}
\end{theorem}
Here, $R(A_S)$ is the risk of the predictor learned from the set $S$ and $R_{emp}(A_S)$ is the mean loss of the same predictor on the training set.
The setting that we study in this paper is slightly different in two regards:
\begin{enumerate}
    \item What we consider to be the learning procedure $A_S$ is in fact a two-step procedure that includes a preprocessing function $\hat T$ learned from the entire data set via $A_T$, followed by the learning procedure $A_f$ that is trained on the preprocessed training set $(\hat{T}(\x_1), y_1), \ldots, (\hat{T}(\x_n), y_n)$. The uniform stability condition must therefore apply to the combined two-step procedure.
    \item We are not interested in bounding the risk with respect to the loss on the training set $\left\{ (\x_1, y_1), \ldots (\x_n, y_n) \right\}$, but rather with respect to the loss on the validation set $\left\{ (\x_{n+1}, y_{n+1}), \ldots, (\x_{n+m}, y_{n+m}) \right\}$.
\end{enumerate}
Nonetheless, the proof of Theorem \ref{thm:stabilityriskbound} can be adapted to our setting in a straightforward manner, which gives the following result,
\begin{theorem}
    Let $A$ be a combined preprocessing+learning algorithm that is uniformly stable with parameter $\gamma$.
    For any $\delta \in (0,1)$ the following  holds with probability $\ge 1-\delta$,
    \begin{align}
        |\errgen - \errval| \lesssim \gamma \log ({n+m}) \log \tfrac{1}{\delta} + \frac{L}{\sqrt{m}} \sqrt{ \log \tfrac{1}{\delta}}.
    \end{align}
    where $n$ is the size of the training set, $m$ is the size of the validation set and $L$ is an upper bound on the loss function.
\end{theorem}
It follows that any combination of a preprocessing and a learning procedure that is uniformly stable is guaranteed to have a small bias due to preprocessing.

\section{Conclusion}
Preliminary data-dependent transformations of data sets prior to predictive modeling and evaluation can introduce a bias
to cross-validation estimators, even when those transformations only depend on the feature vectors in the data set and not on the response.
This bias can result in over-optimistic or under-optimistic estimates of model performance and may lead to sub-optimal model selection.
The magnitude of the bias depends on the specific data distribution, modeling procedure, and cross-validation parameters.
In this paper, we study three examples that involve commonly-used preprocessing transformations
and regression methods.
We summarize some of our key findings:
\begin{enumerate}
    \item The bias due to unsupervised preprocessing may be  positive or negative.
    This depends on the particular parameters of the problem in a non-obvious
manner. For example, in Section \ref{sec:categorical} we demonstrated that the sign of the bias is flipped by changing
the size of the validation set or even by adjusting the level of noise.
    \item When cross-validation is used for model selection, such as picking the best regularization parameter, the presence of unsupervised preprocessing can hurt the average performance of the selected model.
    \item It seems to be typical for the bias to vanish as the number of samples tends to infinity. However, there are counterexamples. In Section \ref{sec:pathological} we gave a pathological example where the bias remains constant
and in Section \ref{sec:feature_selected_linreg} we showed a more realistic example where the bias tends to infinity in a regime where the number of samples is fixed but the dimension tends to infinity.
    \item While the size of the validation set affects the magnitude of the bias, one cannot make the bias vanish by simply manipulating the size of the validation set. In fact, in all of our examples, leave-one-out and two-fold cross-validation showed
roughly similar magnitudes of the bias despite having vastly different validation set sizes.
\end{enumerate}
We believe that in light of these results, the scientific community should re-examine
the use of preliminary data-dependent transformations, particularly when dealing with small sample sizes and high-dimensional data sets.
By default, the various preprocessing stages should be incorporated into the cross-validation scheme as described in Section~\ref{sec:therightway}, thus eliminating any potential biases.
Further research is needed to understand the full impact of preliminary preprocessing in various application domains.

\section*{Reproducibility}
Code for running the simulations and generating exact copies of all the figures in this paper is available at:
\url{https://github.com/mosco/unsupervised-preprocessing}


\section*{Acknowledgments}

We would like to thank Ariel Goldstein, Shay Moran, Kenneth Norman,  Robert Tibshirani and the anonymous reviewers for their invaluable input.
This research was partially supported by Israeli Science Foundation grant ISF 1804/16.


\appendix

\section*{Appendix. Technical proofs}

\section*{Proof of Theorem \ref{prop:var_selected_linreg}}

Recall that the columns are all interchangeable and assume w.l.o.g. that $\widehat{j}=1$. i.e. the first variable was selected.
In that case the preprocessed design matrix is $\tilde{X} = X_{*1}$.
The estimated regression coefficient is
\begin{align} \label{eq:beta_1_est}
    \hat{\beta}_1 = (\tilde{X}^T \tilde X)^{-1} \tilde{X}^T Y = \frac{X_{*1}^T Y}{\| X_{*1}\|^2} 
\end{align}
Recall that $Y = X \boldsymbol{\beta} = \sum_{j=1}^p \beta_j X_{*j}$.
Plugging this into \eqref{eq:beta_1_est} gives
\begin{align}
    \hat{\beta}_1 = \frac{X_{*1}^T Y}{\| X_{*1} \|_2^2}  = \frac{X_{*1}^T \sum_{j=1}^p \beta_j X_{*j}}{\| X_{*1} \|_2^2}
    =
    \beta_1 + \sum_{j=2}^p \beta_j\cdot Z_j
\end{align}
where $Z_j = X_{*1}^T X_{*j} / \|X_{*1} \|_2^2$.
It follows that the prediction for the first observation in the validation set is,
\begin{align}
    \hat{y}_{n+1} &= \hat{\beta}_1 X_{n+1,1} = \left( \beta_1 + \sum_{j=2}^p \beta_j Z_j\right) X_{n+1,1}.
\end{align}
Our model is noiseless and satisfies $y = \x \boldsymbol{\beta}$, hence
\(
    y_{n+1} = \sum_{j=1}^p \beta_j X_{n+1,j}.
\)
The mean squared error of the single validation sample is thus
\begin{align}
    MSE &= \E(\hat{y}_{n+1} - y_{n+1})^2 \\
    &= \E \left\{ \sum_{j=2}^p \beta_j \left( Z_j X_{n+1,1} - X_{n+1,j} \right) \right\}^2 \\
    &= \sum_{j=2}^p \sum_{\ell=2}^p \E \left\{ \beta_j \beta_\ell \left( Z_j X_{n+1,1} - X_{n+1,j} \right) \left( Z_\ell X_{n+1,1}
- X_{n+1,\ell} \right) \right\} \\
    &= \sum_{j=2}^p \sum_{\ell=2}^p \E \left( \beta_j \beta_k \right) \E \left\{ \left( Z_j X_{n+1,1} - X_{n+1,j} \right)
\left( Z_\ell X_{n+1,1}
- X_{n+1,\ell} \right) \right\}
\end{align}
where the last inequality follows from the independence of $\beta_j$. Furthermore, since \\$\beta_1, \ldots, \beta_p\sim\mathcal{N}(0,1)$
it follows that $\E \beta_j \beta_\ell = \delta_{j\ell}$. The MSE simplifies to
\begin{align} \label{eq:blah}
    \sum_{j=2}^p  \E  \left( Z_j X_{n+1,1} - X_{n+1,j} \right)^2 = \sum_{j=2}^p \left\{ \E Z_j^2 X_{n+1,1}^2  -2 \E
\left( Z_j X_{n+1,1} X_{n+1,j}\right) + \E X_{n+1,j}^2 \right\}.
\end{align}
We now claim that $\E  Z_j X_{n+1,1} X_{n+1,j} =0$ by a symmetry argument.
To see why, let $X_{1 \ldots n+m}$ be the random matrix of the training and validation covariates and let $X'_{1 \ldots n+m}$ be the same matrix with the sign of $X_{n+1,1}$ flipped.
Since we assumed that the covariates are drawn i.i.d. $\mathcal{N}(0,1)$, the distribution of $X_{1 \ldots n+m}$ is equal to that of $X'_{1 \ldots n+m}$.
Note that $Z_j$ depends only on the first $n$ observations, so it is not affected by the sign flip.
Therefore, we must have \( \E Z_j X_{n+1,1} X_{n+1,j} = \E Z_j (-X_{n+1,1}) X_{n+1,j} \). A subtle point in this argument is that the variable selection preprocessing step must not be affected by the sign flip of $X_{n+1,1}$. This is the reason we opted to analyze variable selection based on the sum of squares rather than the variance.
We rewrite the last term of \eqref{eq:blah} as,
\begin{align}
    \sum_{j=2}^p \E X_{n+1,j}^2 = \sum_{j=1}^p \E X_{n+1,j}^2 - \E X_{n+1,1}^2 = p  - \E X_{n+1,1}^2.
\end{align}
As for the first term on the right hand side of \eqref{eq:blah}, since columns $2, \ldots, p$ are identically distributed, we have \( \sum_{j=2}^p \E Z_j^2 X_{n+1,1}^2  = (p-1) \E  Z_2^2 X_{n+1,1}^2  \).
Recall that $\hat{\rho}_{1,2}$ is a random variable equal to the cosine of the angle between the first and second
features in the training set covariates. This variable is independent of the column magnitudes, hence it is independent of $X_{n+1,1}^2$ even after
conditioning
on the selection $\widehat{j}=1$. Hence,
\begin{align}
    &\E Z_2^2 X_{n+1,1}^2 
    =
    \E \hat{\rho}_{1,2}^2 \frac{\|X_{*2}\|^2}{\|X_{*1}\|^2} X_{n+1,1}^2 
\end{align}
To summarize, we have
\begin{align}
    &\E \errval = \E(\hat{y}_{n+1} - y_{n+1})^2 
    = (p-1)  \E  \hat{\rho}_{1,2}^2\frac{\|X_{*2}\|^2}{\|X_{*1}\|^2} X_{n+1,1}^2  + p  - \E X_{n+1,1}^2.
\end{align}
Similarly, for a new holdout observation $(\x, y)$ we have
\begin{align}
    \E \errgen
    &=
    \E(\hat{y} - y)^2
    =
    (p-1) \E \hat{\rho}_{1,2}^2 \frac{\|X_{*2}\|^2}{\|X_{*1}\|^2} \cdot \E x^2 + p  - 1.
\end{align}
Hence,
\(
    \E \errval - \E \errgen = (p-1) \E  \hat{\rho}_{1,2}^2\frac{\|X_{*2}\|^2}{\|X_{*1}\|^2} \left( X_{n+1,1}^2 - x^2\right)
    -
    \E X_{n+1,1}^2 + 1
\).
where it was assumed that $\widehat{j} = 1$ w.l.o.g.
The result follows.
$\hspace*{\fill} \square$

\section*{Proof of Corollary \ref{cor:var_selected_linreg}}
Again, we assume w.l.o.g. that $\widehat{j} = 1$ and arbitrarily set $j_o = 2$.
We denote convergence in probability by $\convprob$ and convergence in distribution by $\convdist$.
\paragraph{($p$ fixed, $n \to \infty$):}

In this case $\|X_{*2}\|/\|X_{*1}\| \convprob 1$ and $\hat{\rho}_{1,2} \convprob 0$,
thus as $n \to \infty$ 
\begin{align}
    (p-1) \hat{\rho}_{1,2}^2 \frac{\|X_{*2}\|^2}{\|X_{*1}\|^2} \convprob 0.
\end{align}
To analyze the terms that involve $X_{n+1,1}^2$, recall that we assumed w.l.o.g. that the first column of X is the one with
the highest variance.
Let us denote its norm by
\begin{align}
    R_N = \|(X_{1,1}, \ldots, X_{N,1})\| \text{ where } N=n+m
\end{align}
If we condition on $R_N,$ the vector $(X_{1,1}, \ldots, X_{N,1})$ is sampled uniformly from the $n-1$-sphere of radius $R_n$.
Hence $X_{n+1,1} = R_N \cdot S_N$ where $S_N$ is the first coordinate of a random point on the unit $n-1$-sphere.
    It is well-known that
\begin{align} \label{eq:nsphere_first_coordinate}
    \tfrac12 (S_N+1) \sim \text{Beta}\left(N/2, N/2\right).
\end{align}
With proper normalization, the Beta distribution, converges to a Gaussian as $N \to \infty$
(see for example Lemma A.1 of \cite{MoscovichNadlerSpiegelman2016}). In our case,
\begin{align}
    \sqrt{N}\text{Beta} \left( N/2, N/2 \right)
    \convdist
    \mathcal{N} \left( 1/2, 1/4 \right).
\end{align}
It follows that $\sqrt{N}S_N \convdist \mathcal{N}(0,1)$ and since $R_N/\sqrt{N} \convprob 1$, by Slutsky's theorem we see
that $X_{n+1,1} = R_N S_N \convdist \mathcal{N}(0,1)$ and therefore $X_{n+1,1}^2 \convdist \chi^2(1)$. Applying Slutsky's
theorem again, we obtain
\begin{align}
    \left\{
        (p-1) \hat{\rho}_{1,2}^2 \frac{\|X_{*2}\|^2}{\|X_{*1}\|^2} - 1
    \right\}
    \left(
        X_{n+1,1}^2 - \E x^2 
    \right)
    \convdist
    1-\chi^2(1)
\end{align}
and therefore $\bias \to 0$.
\paragraph{($n,m$ fixed, $p \to \infty$):}
By Eq. \eqref{eq:bias_var_selected_linreg},
\begin{align} \label{eq:bias_decomposed}
        \bias
        =
        \E \left\{ (p-1) \hat{\rho}_{1,2}^2 \frac{\|X_{*2}\|^2}{\|X_{*1}\|^2}X_{n+1,1}^2 \right\}
        -
        \E \left\{ (p-1) \hat{\rho}_{1,2}^2 \frac{\|X_{*2}\|^2}{\|X_{*1}\|^2} \right\}
        +
        \E X_{n+1,1}^2
        +
        1
\end{align}
We start with the first term. It is easy to see that as $p \to \infty$ we have,
\begin{align}
    R_N \to \infty,
    \qquad
    \frac{\|X_{*2}\|^2}{n} \convprob 1,
    \qquad
    \frac{\|X_{*1}\|^2}{R_N^2} \convprob \frac{n}{N}.
\end{align}
Since $X_{n+1,1} = R_N S_N$, it follows that
\begin{align}
    \frac{\|X_{*2}\|^2}{\|X_{*1}\|^2} X_{n+1,1}^2
    =
    \frac{N}{R_N^2} (R_N S_N)^2\{1+o_P(1)\}
    \convdist \chi^2(1)
\end{align}
Note that $\hat{\rho}_{1,2}^2$ is independent of $\frac{\|X_{*2}\|^2}{\|X_{*1}\|^2} X_{n+1,1}^2$ and therefore,  
\begin{align}
    &\E \left\{ (p-1) \hat{\rho}_{1,2}^2 \frac{\|X_{*2}\|^2}{\|X_{*1}\|^2} X_{n+1,1}^2 \right\} \\
    &=
    (p-1)\E \hat{\rho}_{1,2}^2 \E \frac{\|X_{*2}\|^2}{\|X_{*1}\|^2} X_{n+1,1}^2
    \xrightarrow{p \to \infty}
    (p-1) \E \hat{\rho}_{1,2}^2 \E \chi^2(1) = (p-1) \E \hat{\rho}_{1,2}^2.
\end{align}
Recall that $\hat{\rho}_{1,2}$ is nothing but a normalized dot product of two independent Gaussians.
Since dot products are invariant to rotations of both vectors, we may assume w.l.o.g. that the first vector lies on the first
axis and conclude that $\hat{\rho}_{1,2}$ is distributed like the first coordinate of a uniformly drawn unit vector on the $n-1$-sphere.
By Eq. \eqref{eq:nsphere_first_coordinate} we see that $\E[\hat{\rho}_{1,2}] = 0$ and
\begin{align}
    \E[\hat{\rho}_{1,2}^2]
    =
    \var (\hat{\rho}_{1,2})
    =
    \var \{2 \text{Beta}(\tfrac{n-1}{2}, \tfrac{n-1}{2})\} = \tfrac{1}{n}.
\end{align}
It follows that
\begin{align}
    \E \left\{ (p-1) \hat{\rho}_{1,2}^2 \frac{\|X_{*2}\|^2}{\|X_{*1}\|^2} X_{n+1,1}^2 \right\}
    =
    \frac{p-1}{n}.
\end{align}
The second term of Eq. \eqref{eq:bias_decomposed} is negligible with respect to this, since 
\begin{align}
    \E \left\{ (p-1) \hat{\rho}_{1,2}^2 \frac{\|X_{*2}\|^2}{\|X_{*1}\|^2} \right\} = \frac{p-1}{n} \E \frac{\|X_{*2}\|^2}{\|X_{*1}\|^2}
\end{align}
and in the regime that $n$ is fixed and $p \to \infty$ we have $\E \left[ \|X_{*2}\|^2 / \|X_{*1}\|^2\right] \convprob 0$.
As for the third term of Eq. \eqref{eq:bias_decomposed}, note that
\begin{align}
    \E X_{n+1,1}^2 < \E \max_{i=1, \ldots, N,\ j=1,\ldots,p} X_{i,j}^2 .
\end{align}
It is well known that the expectation of the maximum of $k$ chi-squared variables is $O(\log k)$, hence
\(
    \E X_{n+1,1}^2 =  O(\log N + \log p)
\).
Thus we conclude that in the regime of fixed $n,m$ and $p \to \infty$,
\(
    \bias = \frac{p}{n} \{1 + o(1) \} \to \infty.
\)
$\hspace*{\fill} \square$
\section*{Proof of Theorem \ref{prop:rare_categories_MSE}}
We analyze 3 distinct cases:
\paragraph{Case 1:} the category $k$ is rare.
Hence its predicted response is $\hat{f}(k)=0$.
Since the mean response of category $k$ is $\mu_k$, the MSE for predicting the response of a sample
with response $\mu_k + \mathcal{N}(0,\sigma^2)$ is $\sigma^2 + \mu_k^2$.
\paragraph{Case 2:} $k$ is not rare but $\#_n(k)=0$.
Again, the predicted response is zero, leading to the same MSE as in Case 1.
\paragraph{Case 3:} $k$ is not rare and $\#_n(k) \ge 1$.
The predicted response will be the mean of $\#_n(k)$ responses from the training set.
The distribution of this mean is $\mathcal{N}(\mu_k, \sigma^2/\#_n(k))$,
Hence the expected MSE is $\sigma^2 + \sigma^2 / \#_n(k)$.

Combining these 3 cases, we obtain
\begin{align}
    \pr{r_k} \left( \sigma^2 + \mu_k^2 \right)
    +
    \pr{\neg r_k} p_0(k)\left( \sigma^2 + \mu_k^2 \right)
    +
    \pr{\neg r_k} \sum_{i=1}^n p_i(k) \left( \sigma^2 + \tfrac{\sigma^2}{i}\right)
    \\
    =
    \sigma^2
    +
    \pr{r_k} \mu_k^2 + \pr{\neg r_k} p_0(k) \mu_k^2
    +
    \sigma^2 \pr{\neg r_k} \sum_{i=1}^{n}  \frac{p_i(k)}{i}.
\end{align}
$\hspace*{\fill} \square$

\section*{Proof of Theorem \ref{prop:rescaled_lasso_biasp}}

First, we define the shrinkage operator, also known as a soft-thresholding operator. For any $x \in \mathbb{R}$ and $a \ge 0$ it shrinks the
absolute value of $x$ by $a$,
or if $|x| \le a$ it returns zero.
\begin{align}
    \text{shrink}_a(x) := sign(x)(|x|-a)^+ \quad \text{ where } \quad (x)^+ := \text{max}(0,x)
\end{align}
We also define a clipping operator, which clips $x$ to the interval $[-a, a]$,
\begin{align}
    \text{clip}_a(x)  := \max(\min(x, a), -a)).
\end{align}
The following identities are easy to verify. For any $x \in \mathbb{R}$ and any $a,c>0$,
\begin{align}
    \text{shrink}_{ca}(cx) = c \cdot \text{shrink}_a(x) \label{eq:shrink_property} \\
    \text{clip}_a(x) + \text{shrink}_a(x) = x. \label{eq:clip_property}
\end{align}
Let $\tilde{X}_{n,p}$ denote  the rescaled design matrix, and let $\Sigma = \text{diag}(\hat{\sigma}_1, \ldots, \hat{\sigma}_p)$.
The preprocessing stage rescales the j\textsuperscript{th} column of $X$ by $1/\hat{\sigma}_j$, hence
\(
    \tilde{X} = X \Sigma^{-1}.
\)
Consider the least-squares solution
\begin{align}
    \hat{\beta}^{\text{OLS}} = \argmin_{\beta \in \mathbb{R}^p} \frac{1}{2} \| Y - \tilde{X} \beta \|^2.
\end{align}
For the rescaled orthogonal design, the solution is $ \hat{\beta}^{\text{OLS}} = \frac{1}{n}\Sigma X^T Y $ and since we are in the noiseless setting $Y = X \beta$, so we have $ \hat{\beta}^{\text{OLS}} = \frac{1}{n} \Sigma X^T X \beta = \Sigma \beta$.
The simplified Lasso procedure, which applies to Lasso to each dimension separately, yields the  solution,
\begin{align}
    \hat{\beta}_j^{\text{Lasso}} = \text{shrink}_{\lambda \hat{\sigma}_j^2/n}(\hat{\beta}^{\text{OLS}}_j) = \text{shrink}_{\lambda \hat{\sigma}_j^2/n}( \hat{\sigma}_j \beta_j) .
\end{align}
We may rewrite this using the property of the shrinkage operator in \eqref{eq:shrink_property},
\begin{align} \label{eq:beta_j_lasso}
    \hat{\beta}_j^{\text{Lasso}} = \hat{\sigma}_j \text{shrink}_{\lambda \hat{\sigma}_j/n}( \beta_j).
\end{align}
Consider the generalization error conditioned on a draw of $\beta$ and the estimated variances,
\begin{align}
    \errgen|\beta, \hat{\sigma}
    &=
    \E_{\x,y}[y - \hat{f}\{\hatT (\x)\}]^2 \\
    &=
    \E
    \left[
        \x^T \beta -  \x^T \Sigma^{-1} \hat{\beta}^{\text{Lasso}}
        |
        \beta, \hat{\sigma}
    \right]^2  \\
    &=
    \E
    \left[
        \sum_{j=1}^p
        \left(
            \beta_j - \text{shrink}_{\lambda \hat{\sigma}_j/n}( \beta_j) 
        \right)
        x_j
        \Big|
        \beta, \hat{\sigma}
    \right]^2 && \text{(By \eqref{eq:beta_j_lasso})} \\
    &=
    \E
    \left[
        \sum_{j=1}^p \text{clip}_{\lambda \hat{\sigma}_j/n}(\beta_j) x_j
        \Big|
        \beta, \hat{\sigma}
    \right]^2  && \text{(By \eqref{eq:clip_property})}\\
    &=
    \sum_{j=1}^p
    \sum_{k=1}^p
    \text{clip}_{\lambda \hat{\sigma}_j/n}(\beta_j) \text{clip}_{\lambda \hat{\sigma}_k/n}(\beta_k)
    \E
    \left[
        x_j x_k
    \right]. \label{eq:lasso_egen}
\end{align}
Recall that $x_j \simiid \mathcal{N}(0,1)$, hence $\E x_j x_k = \delta_{j,k}$. Thus,
\(
    \errgen|\beta, \hat{\sigma}
    =
    \sum_{j=1}^p
    \text{clip}_{\lambda  \hat{\sigma_j}/n}^2(\beta_j).
\)
To compute the expected generalization error, one must integrate this with respect to the probability density of $\beta$
and $\hat{\sigma}$. Since all coordinates are identically distributed, it suffices to integrate with respect to the first coordinate,
\(  \E \errgen
    =
    p \cdot \E_{\beta_1, \hat{\sigma}_1}
    \text{clip}_{\lambda \hat{\sigma}_1/n}^2(\beta_1)
\).
For the validation error,
\( \errval|\beta, \hat{\sigma}
    =
    \sum_{j=1}^p
    \sum_{k=1}^p
    \expect{\text{clip}_{\lambda \hat{\sigma}_j/n}(\beta_j)
    \text{clip}_{\lambda \hat{\sigma}_k/n}(\beta_k)
    x_{n+1,j}
    x_{n+1,k}
    | \beta, \hat{\sigma}
    } 
\).
Under our assumptions, the training set covariates $\x_1, \ldots, \x_n$ satisfy an orthogonal design,
however the validation samples $\x_{n+1}, \ldots, \x_{n+m}$ are independent gaussians.
Let $(\errval|\beta, \hat{\sigma})_{j,k}$ denote the $j,k$ term of the double sum above.
For any $j\neq k$,
\begin{align}
    (\errval|\beta, \hat{\sigma})_{j,k}
    &=\expect{
        \text{clip}_{\lambda \hat{\sigma}_j/n}(\beta_j)
        \text{clip}_{\lambda \hat{\sigma}_k/n}(\beta_k)
        x_{n+1,j}
        x_{n+1,k} | \beta, \hat{\sigma}
    }\\
    &=
    \text{clip}_{\lambda \hat{\sigma}_j/n}(\beta_j)
    \text{clip}_{\lambda \hat{\sigma}_k/n}(\beta_k)
    \expect{x_{n+1,j} x_{n+1,k} | \hat{\sigma}}\\
    &=
    \text{clip}_{\lambda \hat{\sigma}_j/n}(\beta_j)
    \text{clip}_{\lambda \hat{\sigma}_k/n}(\beta_k)
    \expect{x_{n+1,j} | \hat{\sigma}_j}
    \expect{x_{n+1,k} | \hat{\sigma}_k}.
\end{align}
The second equality follows from the fact that $\text{clip}_{\lambda \hat{\sigma}_j/n}(\beta_j)$ is constant, conditioned on $\beta, \hat{\sigma}$.
Due to symmetry, it must be the case that $\expect{x_{n+1,j} | \hat{\sigma}_j}$ = $\expect{-x_{n+1,j} | \hat{\sigma}_j} = 0$.
Therefore the above expectation is zero for any $j \neq k$.
However, for $j=k$ we have
\begin{align}
    (\errval|\beta, \hat{\sigma})_{j=k}
    =
    \expect{
        \text{clip}^2_{\lambda \hat{\sigma}_j/n}(\beta_j)
        x^2_{n+1,j}
        |
        \beta_j, \hat{\sigma}_j
    }.
\end{align}
Since all coordinates are equally distributed, we express the expected validation error in terms of the first coordinate.
\begin{align}
    \E \errval
    &=
    p \cdot \E_{\beta_1, \hat{\sigma}_1, x_{n+1,1}} \{ \text{clip}_{\lambda \hat{\sigma}_1/n}^2(\beta_1) x_{n+1,1}^2 \}
    \\
    &=
    p \cdot \E  \text{clip}_{\lambda \hat{\sigma}_1/n}^2(\beta_1)
    \E x_{n+1,1}^2
    +
    p \cdot \text{Cov}(\text{clip}_{\lambda \hat{\sigma}_1/n}^2(\beta_1), x_{n+1,1}^2) \\
    &=
    \E \errgen + p \cdot \E\text{Cov}(\text{clip}_{\lambda \hat{\sigma}_1/n}^2(\beta_1), x_{n+1,1}^2).
\end{align}
$\hspace*{\fill} \square$
\section*{Proof of Theorem \ref{prop:rescaled_lasso_bias_negative}}
We begin with a technical lemma that is concerned with the covariance of two random variables that have a monotone dependency.
\begin{lemma} \label{lemma:monotone_decreasing}
    Let $Y,Z$ be random variables such that $\E(Y|Z)$ is monotone-increasing, then
    \(
        \mathrm{Cov}(Y,Z) > 0.
    \)
\end{lemma}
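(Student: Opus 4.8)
The plan is to reduce the statement to the covariance of $Z$ with a monotone function of itself, and then apply the standard ``independent copy'' symmetrization. First I would condition on $Z$. Writing $g(Z) := \expect{Y \mid Z}$ for the (monotone-increasing) conditional mean, the tower property gives $\expect{YZ} = \expect{Z\,\expect{Y \mid Z}} = \expect{Z\,g(Z)}$ and $\expect{Y} = \expect{g(Z)}$, so that
\begin{align}
    \mathrm{Cov}(Y,Z) = \expect{Z\,g(Z)} - \expect{Z}\,\expect{g(Z)} = \mathrm{Cov}(Z, g(Z)).
\end{align}
This replaces $Y$ by a \emph{deterministic} increasing function $g$ of $Z$, so that it remains only to show $\mathrm{Cov}(Z, g(Z)) > 0$.

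For the symmetrization step I would introduce an independent copy $Z'$ of $Z$ and expand the product, using that $Z'$ is identically distributed and independent of $Z$, to obtain
\begin{align}
    \expect{(Z - Z')\bigl(g(Z) - g(Z')\bigr)} = 2\,\mathrm{Cov}(Z, g(Z)).
\end{align}
Because $g$ is monotone-increasing, the two factors $(Z - Z')$ and $g(Z) - g(Z')$ always carry the same sign, so the integrand is nonnegative pointwise; this immediately yields $\mathrm{Cov}(Y,Z) \ge 0$.

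The main obstacle is upgrading this to the strict inequality claimed. To do so I would exploit that $g$ is strictly increasing and that $Z$ is non-degenerate: there then exist thresholds $a < b$ with $\pr{Z \le a} > 0$ and $\pr{Z \ge b} > 0$, and on the positive-probability event $\{Z \le a,\ Z' \ge b\}$ both factors are at most $-(b-a)$ and $-(g(b)-g(a))$ respectively, so the integrand is bounded below by $(b-a)\bigl(g(b) - g(a)\bigr) > 0$. Integrating over this event forces $\mathrm{Cov}(Z, g(Z)) > 0$, and the reduction above transfers this to $\mathrm{Cov}(Y,Z) > 0$. The delicate point is verifying that the hypotheses genuinely deliver strictness --- that $g$ is strictly, and not merely weakly, increasing on the support of $Z$ --- which in the application to Theorem~\ref{prop:rescaled_lasso_bias_negative} is precisely what the monotone-dependence structure between $\mathrm{clip}_{\lambda \hat{\sigma}_1/n}^2(\beta_1)$ and $x_{n+1,1}^2$ (mediated through $\hat{\sigma}_1$) is meant to guarantee.
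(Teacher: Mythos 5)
Your proposal is correct and shares its skeleton with the paper's proof: both begin with the same tower-property reduction $\expect{YZ} = \expect{Z\,\expect{Y|Z}}$, turning the claim into positivity of $\mathrm{Cov}(Z, g(Z))$ for the increasing function $g(z) = \expect{Y|Z=z}$. Where you diverge is in how that positivity is established. The paper simply cites ``the continuous variant of Chebyshev's sum inequality'' as a black box; you instead prove it from scratch via the independent-copy symmetrization $\expect{(Z-Z')\bigl(g(Z)-g(Z')\bigr)} = 2\,\mathrm{Cov}(Z,g(Z))$, which is in fact the standard proof of that very inequality. What your route buys is self-containedness and, more substantively, an honest treatment of strictness: the symmetrization gives only $\mathrm{Cov}(Y,Z)\ge 0$ pointwise, and you correctly identify that upgrading to the strict inequality requires $g$ to be non-constant on the support of $Z$ (and $Z$ non-degenerate), which you then secure via the positive-probability event $\{Z \le a,\ Z' \ge b\}$. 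This is a genuine improvement in rigor, because the lemma as literally stated is false under a merely weakly increasing $\expect{Y|Z}$ --- take $Y$ independent of $Z$, so that $\expect{Y|Z}$ is constant and $\mathrm{Cov}(Y,Z)=0$ --- a caveat the paper's one-line appeal to strict Chebyshev glosses over. Your closing remark is also on target: in the application to Theorem~\ref{prop:rescaled_lasso_bias_negative}, the required strictness does hold, since $\pr{\hat{\sigma}_1^2 \ge t \mid x_{n+1,1}^2 = s}$ is genuinely (not just weakly) increasing in $s$ after integration over $t$, the Gaussian summands having full support. So: same decomposition, but you replace a citation with its proof and patch a strictness gap the paper leaves implicit.
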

\begin{proof}
We rewrite $\E(YZ)$ using the law of iterated expectation,
\begin{align}
    \E(YZ) = \E_Z \E(YZ|Z) = \E_Z ( Z \E(Y|Z)).
\end{align}
Both $Z$ and $\E(Y|Z)$ are increasing functions of $Z$. 
By the continuous variant of Chebyshev's sum inequality,
\(
    \E\{Z\E(Y|Z)\} > \E Z \cdot \E\{\E(Y|Z)\}.
\)
and by the law of iterated expectation, the right-hand side is equal to $\E Z \cdot \E Y$.
\end{proof}

Now, let the random variables $Z,Y$  denote $x_{n+1,1}^2$ and $\text{clip}_{\lambda \hat{\sigma}_1/n}^2(\beta_1)$ respectively.
To prove the theorem, we need to show that the following inequality holds.
\begin{align} \label{eq:cov_negative}
    \text{Cov}(\text{clip}_{\lambda \hat{\sigma}_1/n}^2(\beta_1),x_{n+1,1}^2) = \E(ZY)  - \E Z \cdot \E Y > 0.
\end{align}
We will  show that $\E(Y|Z)$ is a monotone-increasing function of Z. \eqref{eq:cov_negative} will then follow from Lemma \ref{lemma:monotone_decreasing}.
For every $Z$ the conditioned random variable $Y|Z$ is non-negative. We may rewrite its expectation using the integral of the tail probabilities,
\begin{align}
    \E(Y|Z) = \int_0^{+\infty} \Pr(Y \ge t|Z)dt.
\end{align}
From the definition of the clipping function, we have
\begin{align}
    \Pr(Y \ge t|Z)
    =
    \Pr(
        \lambda^2 \hat{\sigma}_1^2/n^2 \ge t \text{ and } \beta_1^2 \ge t | Z
    ).
\end{align}
The coefficient $\beta_1$ is drawn independently of the covariates, hence is independent of $Z$ and also independent of $\hat{\sigma}_1$.
Therefore the probability of the conjunction is the product of probabilities,
\(
    \Pr(
        \lambda^2 \hat{\sigma}_1^2 /n^2 \ge t \text{ and } \beta_1^2 \ge t | Z
    )
    =
    \Pr(
        \lambda^2 \hat{\sigma}_1^2 / n^2 \ge t |Z
    )
    \cdot
    \Pr(
        \beta_1^2 \ge t
    ).
\)
Putting it all together, we have shown that
\begin{align}
    \E(Y|Z) = \int_0^\infty \Pr(\lambda^2 \hat{\sigma}_1^2 /n^2\ge t | Z) \cdot \Pr(\beta_1^2 \ge t) dt.
\end{align}
To prove that $\E(Y|Z)$ is monotone-increasing as a function of $Z$,
it suffices  to show that $\Pr(\lambda^2 \hat{\sigma}_1^2/n^2 \ge t | Z) = \Pr(\hat{\sigma}_1^2 \ge n^2 t/\lambda^2 | x_{n+1,1}^2)$
is a monotone-increasing function of $x_{n+1,1}^2$.
Recall that $\hat{\sigma_1}^2 = \tfrac{1}{n+m}\sum_{i=1}^{n+m} x_{i,1}^2 $.
We have
\begin{align}
    \Pr(\hat{\sigma_1}^2 \ge t | x_{n+1,1}^2 = s)
    =
    \Pr(x_{1,1}^2 +\ \ldots + x_{n,1}^2 + 0 + x_{n+2,1}^2 + x_{n+m,1}^2 \ge t-s).
\end{align}
Since all of these variables are independent Gaussians, the probability is monotone-increasing in $s$. This completes the proof.  $\hspace*{\fill} \square$

\section*{Supplementary. Prevalence of unsupervised preprocessing in science}

\begin{figure}
\caption{\label{tbl:science_articles} Science Magazine articles that match the search term ``cross-validation" published
between January 1\textsuperscript{st} 2017 and July 1\textsuperscript{st} 2018.}
\centering
\fbox{%
\begin{tabular}{lcc} 
Article& Performs cross-validation? & Preprocessing? \\
\hline
\hline
\cite{Davoli2017} & Yes & Yes \\\hline
\cite{Cederman2017} & No & \\\hline
\cite{Kennedy2017} & Yes& No\\\hline
\cite{Keller2017} & Yes & No \\\hline
\cite{Liu2017} & Yes & Yes\\\hline
\cite{Caldieri2017} & No & \\\hline
\cite{Leffler2017} & Yes & No \\\hline
\cite{Zhou2017} & No & \\\hline
\cite{Chatterjee2017} & No& \\\hline
\cite{Klaeger2017} & Yes & No \\\hline
\cite{George2017} & Yes & No \\\hline
\cite{Lerner2018} & No & \\\hline
\cite{Delgado-Baquerizo2018} & No & \\\hline
\cite{Ni2018} & Yes & Yes\\\hline
\cite{Dakin2018} & Yes & Yes \\\hline
\cite{Nosil2018} & Yes & No\\\hline
\cite{Cohen2018} & Yes & Yes \\\hline
\cite{Athalye2018} & Yes & No\\\hline
\cite{Mills2018} & Yes & No \\\hline
\cite{Dai2018} & No &  \\\hline
\cite{Mi2018} & Yes & No \\\hline
\cite{Ahneman2018} & Yes & Yes \\\hline
\cite{VanMeter2018} & No & \\\hline
\cite{Barneche2018} & Yes & No\\\hline
\cite{Poore2018} & Yes & No\\\hline
\cite{Scheib2018} & Yes & Yes\\\hline
\cite{Ngo2018} & Yes & No\\\hline
\cite{Ota2018} & Yes & No\\\hline\hline
{Total Yes count} &  20 & 7
\end{tabular}}
\end{figure}

To estimate the prevalence of unsupervised preprocessing prior to cross-validation,
we examined all of the papers published in \textit{Science} between January 1\textsuperscript{st} 2017 and July 1\textsuperscript{st}
2018
which reported performing cross-validation.
To obtain the list of articles we used the advanced search page \url{http://science.sciencemag.org/search}
with the  search term ``cross-validation'' in the above-mentioned time period, limiting the search to \textit{Science Magazine}.
This resulted in a list of 28 publications. However, only 20 of those actually analyze data using a cross-validation (or
train-test)  procedure.
We  read these 20 papers and discovered that at least 7 of them seem to be doing some kind of unsupervised preprocessing
prior to cross-validation.
Hence, 35\% of our sample of papers may suffer from a bias in the validation error!
This  result is based on our  understanding of the data processing pipeline in papers from diverse fields, in most of which
we are far from experts.
Hence our observations may contain errors (in either direction).
The main takeaway is that unsupervised preprocessing prior to cross-validation is very common in high impact scientific publications.
See Table \ref{tbl:science_articles} for the full list of articles that we examined.
In the rest of this section, we describe the details of the unsupervised preprocessing stage in the 7 articles that we believe
perform it.

\paragraph{Tumor aneuploidy correlates with markers of immune evasion and with reduced response to immunotherapy} \cite{Davoli2017}
This work involves a large number of statistical analyses. In one of them, a continuous feature is transformed
into a discrete feature, using percentiles calculated from the entire data set. The transformed feature is then
incorporated into a Lasso model.
This is described in the ``Lasso classification method" section of the article:

``We defined the tumors as having low or high cell cycle or immune signature scores using the 30th and 70th percentiles,
as described above, and used a binomial model.
[...]
We divided the data set into a training and test set, representing two thirds and the remaining one third of the data set,
respectively, for each tumor type. We applied lasso to the training set using 10-fold cross validation."

\paragraph{CRISPRi-based genome-scale identification of functional long noncoding RNA loci in human cells \cite{Liu2017}}

In this paper, the data is standardized prior to cross-validation.
This is described in the ``Materials and Methods" section of the supplementary:
 
``Predictor variables were then centered to the mean and z standardized.
[...] 100 iterations of ten-fold cross validation was performed
by randomly withholding 10\% of the dataset and training logistic regression models using the remaining data."

\paragraph{Learning and attention reveal a general relationship between population activity and behavior \cite{Ni2018}}
In this study PCA was performed on the entire data set. Then the first principal axis was extracted and a cross-validated
predictor was trained
using projections on this principal axis.
This is described in the left column of the 2\textsuperscript{nd} page of the article: 

``We performed principal component analysis (PCA) on population responses to the same repeated stimuli used to compute spike
count correlations
(fig. S3),
meaning that the first PC is by definition the axis that explains more of the correlated variability than any other dimension
[...]
A linear, cross-validated choice decoder (Fig. 4A) could detect differences in hit versus miss trial responses to the changed
stimulus from V4
population activity along the first PC alone as well as it could from our full data set"

\paragraph{Morphology, muscle capacity, skill, and maneuvering ability in hummingbirds \cite{Dakin2018}}

In this work, the authors perform a type of categorical cut-off. This is described in the ``Statistical Analysis" section
of the supplementary
to their paper:

``We ran a cross-validation procedure for each discriminant analysis to evaluate how well it could categorize the species.
We first fit the discriminant model using a partially random subset of 72\% of the complete records ($n = 129$ out of 180
individuals),
and then used the resulting model to predict the species labels for the remaining 51 samples.
The subset for model building included all individuals from species with fewer than 3 individuals and randomly selected 2/3
of the individuals
from species with $\ge 3$ individuals."

\paragraph{Detection and localization of surgically resectable cancers with a multi-analyte blood test \cite{Cohen2018}}
Mutant Allele Frequency normalization was performed on the entire data set prior to cross-validation.
This is described in page 2 of their supplementary:

``1) MAF normalization. All mutations that did not have $>1$ supermutant in at least one well were excluded from the analysis.
The mutant allele frequency (MAF), defined as the ratio between the total number of supermutants in each well from that sample
and the total
number of UIDs in the same well from that sample,
was first normalized based on the observed MAFs for each mutation in a set of normal controls comprising the normal plasmas
in the training set
plus a set of 256 WBCs from unrelated healthy individuals.
All MAFs with $<100$ UIDs were set to zero.
[...]
Standard normalization, i.e. subtracting the mean and dividing by the standard deviation, did not perform as well in cross-validation.
"

\paragraph{Predicting reaction performance in C-N cross-coupling using machine learning \cite{Ahneman2018}}
Feature standardization was performed prior to cross-validation. The authors specifically describe this choice in the ``Modeling"
section of
the supplementary
and mention the more conservative choice of learning the rescaling parameters only on the training set:

``The descriptor data was centered and scaled prior to data-splitting and modeling using the \software{scale(x)} function
in \software{R}.
This function normalizes the descriptors by substracting the mean and dividing by the standard deviation. An alternative
approach to feature
normalization, not used in this study,
involves scaling the training set and applying the mean and variance to scale the test set."

\paragraph{Ancient human parallel lineages within North America contributed to a coastal expansion \cite{Scheib2018}}
In this work, the authors filtered out all of the genetic variants with Minor Allele Frequency (MAF) below 5\% and later
performed cross-validation
on the filtered data set.
This is described in the ``ADMIXTURE analysis" section of their supplementary:

``the worldwide comparative dataset [...] was pruned for \software{---maf 0.05} [...]
and run through \software{ADMIXTURE v 1.23} in 100 independent runs with default settings plus \software{---cv} to identify
the 5-fold cross-validation
error at each k".

\vskip 0.2in
\small

\bibliographystyle{chicago}
\bibliography{unsupervised-preprocessing}

\begin{thebibliography}{}

\bibitem[\protect\citeauthoryear{Ahneman, Estrada, Lin, Dreher, and
  Doyle}{Ahneman et~al.}{2018}]{Ahneman2018}
Ahneman, D.~T., J.~G. Estrada, S.~Lin, S.~D. Dreher, and A.~G. Doyle (2018,
  apr).
\newblock {Predicting reaction performance in C–N cross-coupling using
  machine learning}.
\newblock {\em Science\/}~{\em 360\/}(6385), 186--190.

\bibitem[\protect\citeauthoryear{Ambroise and McLachlan}{Ambroise and
  McLachlan}{2002}]{AmbroiseMclachlan2002}
Ambroise, C. and G.~J. McLachlan (2002).
\newblock {Selection bias in gene extraction on the basis of microarray
  gene-expression data}.
\newblock {\em Proceedings of the National Academy of Sciences\/}~{\em
  99\/}(10), 6562--6566.

\bibitem[\protect\citeauthoryear{Arlot and Celisse}{Arlot and
  Celisse}{2010}]{ArlotCelisse2010}
Arlot, S. and A.~Celisse (2010).
\newblock {A survey of cross-validation procedures for model selection}.
\newblock {\em Statistics Surveys\/}~{\em 4}, 40--79.

\bibitem[\protect\citeauthoryear{Athalye, Santos, Carmena, and Costa}{Athalye
  et~al.}{2018}]{Athalye2018}
Athalye, V.~R., F.~J. Santos, J.~M. Carmena, and R.~M. Costa (2018, mar).
\newblock {Evidence for a neural law of effect}.
\newblock {\em Science\/}~{\em 359\/}(6379), 1024--1029.

\bibitem[\protect\citeauthoryear{Barneche, Robertson, White, and
  Marshall}{Barneche et~al.}{2018}]{Barneche2018}
Barneche, D.~R., D.~R. Robertson, C.~R. White, and D.~J. Marshall (2018, may).
\newblock {Fish reproductive-energy output increases disproportionately with
  body size}.
\newblock {\em Science\/}~{\em 360\/}(6389), 642--645.

\bibitem[\protect\citeauthoryear{Bousquet and Elisseeff}{Bousquet and
  Elisseeff}{2002}]{BousquetElisseeff2002}
Bousquet, O. and A.~Elisseeff (2002).
\newblock {Stability and Generalization}.
\newblock {\em Journal of Machine Learning Research\/}~{\em 2}, 499--526.

\bibitem[\protect\citeauthoryear{Bousquet, Klochkov, and Zhivotovskiy}{Bousquet
  et~al.}{2020}]{BousquetKlochkovZhivotovskiy2020}
Bousquet, O., Y.~Klochkov, and N.~Zhivotovskiy (2020).
\newblock {Sharper bounds for uniformly stable algorithms}.
\newblock In {\em Conference on Learning Theory (COLT)}.

\bibitem[\protect\citeauthoryear{Caldieri, Barbieri, Nappo, Raimondi, Bonora,
  Conte, Verhoef, Confalonieri, Malabarba, Bianchi, Cuomo, Bonaldi, Martini,
  Mazza, Pinton, Tacchetti, Polo, {Di Fiore}, and Sigismund}{Caldieri
  et~al.}{2017}]{Caldieri2017}
Caldieri, G., E.~Barbieri, G.~Nappo, A.~Raimondi, M.~Bonora, A.~Conte, L.~G.
  G.~C. Verhoef, S.~Confalonieri, M.~G. Malabarba, F.~Bianchi, A.~Cuomo,
  T.~Bonaldi, E.~Martini, D.~Mazza, P.~Pinton, C.~Tacchetti, S.~Polo, P.~P. {Di
  Fiore}, and S.~Sigismund (2017, may).
\newblock {Reticulon 3–dependent ER-PM contact sites control EGFR nonclathrin
  endocytosis}.
\newblock {\em Science\/}~{\em 356\/}(6338), 617--624.

\bibitem[\protect\citeauthoryear{Cederman and Weidmann}{Cederman and
  Weidmann}{2017}]{Cederman2017}
Cederman, L.-E. and N.~B. Weidmann (2017, feb).
\newblock {Predicting armed conflict: Time to adjust our expectations?}
\newblock {\em Science\/}~{\em 355\/}(6324), 474--476.

\bibitem[\protect\citeauthoryear{Chang and Lin}{Chang and
  Lin}{2011}]{libsvm2011}
Chang, C.-C. and C.-J. Lin (2011, apr).
\newblock {: A library for support vector machines}.
\newblock {\em ACM Transactions on Intelligent Systems and Technology\/}~{\em
  2\/}(3), 1--27.

\bibitem[\protect\citeauthoryear{Chatterjee, Gierach, Sutton, Feely, Crisp,
  Eldering, Gunson, O'Dell, Stephens, and Schimel}{Chatterjee
  et~al.}{2017}]{Chatterjee2017}
Chatterjee, A., M.~M. Gierach, A.~J. Sutton, R.~A. Feely, D.~Crisp,
  A.~Eldering, M.~R. Gunson, C.~W. O'Dell, B.~B. Stephens, and D.~S. Schimel
  (2017, oct).
\newblock {Influence of El Ni{\~{n}}o on atmospheric CO2 over the tropical
  Pacific Ocean: Findings from NASA's OCO-2 mission}.
\newblock {\em Science\/}~{\em 358\/}(6360).

\bibitem[\protect\citeauthoryear{Cohen, Li, Wang, Thoburn, Afsari, Danilova,
  Douville, Javed, Wong, Mattox, Hruban, Wolfgang, Goggins, {Dal Molin}, Wang,
  Roden, Klein, Ptak, Dobbyn, Schaefer, Silliman, Popoli, Vogelstein, Browne,
  Schoen, Brand, Tie, Gibbs, Wong, Mansfield, Jen, Hanash, Falconi, Allen,
  Zhou, Bettegowda, Diaz, Tomasetti, Kinzler, Vogelstein, Lennon, and
  Papadopoulos}{Cohen et~al.}{2018}]{Cohen2018}
Cohen, J.~D., L.~Li, Y.~Wang, C.~Thoburn, B.~Afsari, L.~Danilova, C.~Douville,
  A.~A. Javed, F.~Wong, A.~Mattox, R.~H. Hruban, C.~L. Wolfgang, M.~G. Goggins,
  M.~{Dal Molin}, T.-L. Wang, R.~Roden, A.~P. Klein, J.~Ptak, L.~Dobbyn,
  J.~Schaefer, N.~Silliman, M.~Popoli, J.~T. Vogelstein, J.~D. Browne, R.~E.
  Schoen, R.~E. Brand, J.~Tie, P.~Gibbs, H.-L. Wong, A.~S. Mansfield, J.~Jen,
  S.~M. Hanash, M.~Falconi, P.~J. Allen, S.~Zhou, C.~Bettegowda, L.~A. Diaz,
  C.~Tomasetti, K.~W. Kinzler, B.~Vogelstein, A.~M. Lennon, and N.~Papadopoulos
  (2018, feb).
\newblock {Detection and localization of surgically resectable cancers with a
  multi-analyte blood test}.
\newblock {\em Science\/}~{\em 359\/}(6378), 926--930.

\bibitem[\protect\citeauthoryear{Dai and Zhou}{Dai and Zhou}{2018}]{Dai2018}
Dai, X. and Z.~H. Zhou (2018, apr).
\newblock {Structure of the herpes simplex virus 1 capsid with associated
  tegument protein complexes}.
\newblock {\em Science\/}~{\em 360\/}(6384).

\bibitem[\protect\citeauthoryear{Dakin, Segre, Straw, and Altshuler}{Dakin
  et~al.}{2018}]{Dakin2018}
Dakin, R., P.~S. Segre, A.~D. Straw, and D.~L. Altshuler (2018, feb).
\newblock {Morphology, muscle capacity, skill, and maneuvering ability in
  hummingbirds}.
\newblock {\em Science\/}~{\em 359\/}(6376), 653--657.

\bibitem[\protect\citeauthoryear{Davoli, Uno, Wooten, and Elledge}{Davoli
  et~al.}{2017}]{Davoli2017}
Davoli, T., H.~Uno, E.~C. Wooten, and S.~J. Elledge (2017, jan).
\newblock {Tumor aneuploidy correlates with markers of immune evasion and with
  reduced response to immunotherapy}.
\newblock {\em Science\/}~{\em 355\/}(6322).

\bibitem[\protect\citeauthoryear{Delgado-Baquerizo, Oliverio, Brewer,
  Benavent-Gonz{\'{a}}lez, Eldridge, Bardgett, Maestre, Singh, and
  Fierer}{Delgado-Baquerizo et~al.}{2018}]{Delgado-Baquerizo2018}
Delgado-Baquerizo, M., A.~M. Oliverio, T.~E. Brewer,
  A.~Benavent-Gonz{\'{a}}lez, D.~J. Eldridge, R.~D. Bardgett, F.~T. Maestre,
  B.~K. Singh, and N.~Fierer (2018, jan).
\newblock {A global atlas of the dominant bacteria found in soil}.
\newblock {\em Science\/}~{\em 359\/}(6373), 320--325.

\bibitem[\protect\citeauthoryear{Dua and Graff}{Dua and Graff}{2017}]{uci}
Dua, D. and C.~Graff (2017).
\newblock {{UCI} Machine Learning Repository}.

\bibitem[\protect\citeauthoryear{Evgeniou, Pontil, and Elisseeff}{Evgeniou
  et~al.}{2004}]{EvgeniouEtal2004}
Evgeniou, T., M.~Pontil, and A.~Elisseeff (2004, apr).
\newblock {Leave One Out Error, Stability, and Generalization of Voting
  Combinations of Classifiers}.
\newblock {\em Machine Learning\/}~{\em 55\/}(1), 71--97.

\bibitem[\protect\citeauthoryear{Feldman and Vondrak}{Feldman and
  Vondrak}{2018}]{FeldmanVondrak2018}
Feldman, V. and J.~Vondrak (2018).
\newblock {Generalization Bounds for Uniformly Stable Algorithms}.
\newblock In {\em Neural Information Processing Systems (NeurIPS)}, pp.\
  9747--9757.

\bibitem[\protect\citeauthoryear{Feldman and Vondr{\'{a}}k}{Feldman and
  Vondr{\'{a}}k}{2019}]{FeldmanVondrak2019}
Feldman, V. and J.~Vondr{\'{a}}k (2019).
\newblock {High probability generalization bounds for uniformly stable
  algorithms with nearly optimal rate}.
\newblock In {\em onference on Learning Theory (COLT)}, Volume~99, pp.\  1--10.

\bibitem[\protect\citeauthoryear{George, Lehrach, Kansky,
  L{\'{a}}zaro-Gredilla, Laan, Marthi, Lou, Meng, Liu, Wang, Lavin, and
  Phoenix}{George et~al.}{2017}]{George2017}
George, D., W.~Lehrach, K.~Kansky, M.~L{\'{a}}zaro-Gredilla, C.~Laan,
  B.~Marthi, X.~Lou, Z.~Meng, Y.~Liu, H.~Wang, A.~Lavin, and D.~S. Phoenix
  (2017, dec).
\newblock {A generative vision model that trains with high data efficiency and
  breaks text-based CAPTCHAs}.
\newblock {\em Science\/}~{\em 358\/}(6368).

\bibitem[\protect\citeauthoryear{Hamidieh}{Hamidieh}{2018}]{Hamidieh2018}
Hamidieh, K. (2018, nov).
\newblock {A data-driven statistical model for predicting the critical
  temperature of a superconductor}.
\newblock {\em Computational Materials Science\/}~{\em 154\/}(April), 346--354.

\bibitem[\protect\citeauthoryear{Hardt, Recht, and Singer}{Hardt
  et~al.}{2016}]{HardtRechSinger2016}
Hardt, M., B.~Recht, and Y.~Singer (2016).
\newblock {Train faster, generalize better: Stability of stochastic gradient
  descent}.
\newblock In {\em International Conference on Machine Learning (ICML)}.

\bibitem[\protect\citeauthoryear{Harrell}{Harrell}{2015}]{Harrell2015}
Harrell, F.~E. (2015).
\newblock {\em {Regression Modeling Strategies}}.
\newblock Springer Series in Statistics. Cham: Springer International
  Publishing.

\bibitem[\protect\citeauthoryear{Hastie, Tibshirani, and Friedman}{Hastie
  et~al.}{2009}]{HastieTibshiraniFriedman2009}
Hastie, T., R.~Tibshirani, and J.~Friedman (2009).
\newblock {\em {The Elements of Statistical Learning}\/} (2nd ed.).
\newblock Springer Series in Statistics.

\bibitem[\protect\citeauthoryear{Hornung, Bernau, Truntzer, Wilson, Stadler,
  and Boulesteix}{Hornung et~al.}{2015}]{Hornung2015}
Hornung, R., C.~Bernau, C.~Truntzer, R.~Wilson, T.~Stadler, and A.-L.
  Boulesteix (2015, dec).
\newblock {A measure of the impact of CV incompleteness on prediction error
  estimation with application to PCA and normalization}.
\newblock {\em BMC Medical Research Methodology\/}~{\em 15\/}(1), 95.

\bibitem[\protect\citeauthoryear{Hsu, Chang, and Lin}{Hsu
  et~al.}{2010}]{libsvmguide2010}
Hsu, C.-W., C.-C. Chang, and C.-J. Lin (2010).
\newblock {A Practical Guide to Support Vector Classification}.
\newblock Technical report.

\bibitem[\protect\citeauthoryear{James, Witten, Hastie, and Tibshirani}{James
  et~al.}{2013}]{JamesWittenHastieTibshirani2014}
James, G., D.~Witten, T.~Hastie, and R.~Tibshirani (2013).
\newblock {\em {An Introduction to Statistical Learning: with Applications in
  R}}.
\newblock Springer-Verlag New York.

\bibitem[\protect\citeauthoryear{Kaufman, Rosset, Perlich, and
  Stitelman}{Kaufman et~al.}{2012}]{KaufmanRossetPerlichStitelman2012}
Kaufman, S., S.~Rosset, C.~Perlich, and O.~Stitelman (2012).
\newblock {Leakage in data mining: Formulation, detection, and avoidance}.
\newblock {\em ACM Transactions on Knowledge Discovery from Data\/}~{\em
  6\/}(4), 1--21.

\bibitem[\protect\citeauthoryear{Keller, Gerkin, Guan, Dhurandhar, Turu,
  Szalai, Mainland, Ihara, Yu, Wolfinger, Vens, Schietgat, {De Grave}, Norel,
  Stolovitzky, Cecchi, Vosshall, and Meyer}{Keller et~al.}{2017}]{Keller2017}
Keller, A., R.~C. Gerkin, Y.~Guan, A.~Dhurandhar, G.~Turu, B.~Szalai, J.~D.
  Mainland, Y.~Ihara, C.~W. Yu, R.~Wolfinger, C.~Vens, L.~Schietgat, K.~{De
  Grave}, R.~Norel, G.~Stolovitzky, G.~A. Cecchi, L.~B. Vosshall, and P.~Meyer
  (2017, feb).
\newblock {Predicting human olfactory perception from chemical features of odor
  molecules}.
\newblock {\em Science\/}~{\em 355\/}(6327), 820--826.

\bibitem[\protect\citeauthoryear{Kennedy, Wojcik, and Lazer}{Kennedy
  et~al.}{2017}]{Kennedy2017}
Kennedy, R., S.~Wojcik, and D.~Lazer (2017, feb).
\newblock {Improving election prediction internationally}.
\newblock {\em Science\/}~{\em 355\/}(6324), 515--520.

\bibitem[\protect\citeauthoryear{Klaeger, Heinzlmeir, Wilhelm, Polzer, Vick,
  Koenig, Reinecke, Ruprecht, Petzoldt, Meng, Zecha, Reiter, Qiao, Helm, Koch,
  Schoof, Canevari, Casale, Depaolini, Feuchtinger, Wu, Schmidt, Rueckert,
  Becker, Huenges, Garz, Gohlke, Zolg, Kayser, Vooder, Preissner, Hahne,
  T{\~{o}}nisson, Kramer, G{\"{o}}tze, Bassermann, Schlegl, Ehrlich, Aiche,
  Walch, Greif, Schneider, Felder, Ruland, M{\'{e}}dard, Jeremias, Spiekermann,
  and Kuster}{Klaeger et~al.}{2017}]{Klaeger2017}
Klaeger, S., S.~Heinzlmeir, M.~Wilhelm, H.~Polzer, B.~Vick, P.-A. Koenig,
  M.~Reinecke, B.~Ruprecht, S.~Petzoldt, C.~Meng, J.~Zecha, K.~Reiter, H.~Qiao,
  D.~Helm, H.~Koch, M.~Schoof, G.~Canevari, E.~Casale, S.~R. Depaolini,
  A.~Feuchtinger, Z.~Wu, T.~Schmidt, L.~Rueckert, W.~Becker, J.~Huenges, A.-K.
  Garz, B.-O. Gohlke, D.~P. Zolg, G.~Kayser, T.~Vooder, R.~Preissner, H.~Hahne,
  N.~T{\~{o}}nisson, K.~Kramer, K.~G{\"{o}}tze, F.~Bassermann, J.~Schlegl,
  H.-C. Ehrlich, S.~Aiche, A.~Walch, P.~A. Greif, S.~Schneider, E.~R. Felder,
  J.~Ruland, G.~M{\'{e}}dard, I.~Jeremias, K.~Spiekermann, and B.~Kuster (2017,
  dec).
\newblock {The target landscape of clinical kinase drugs}.
\newblock {\em Science\/}~{\em 358\/}(6367).

\bibitem[\protect\citeauthoryear{Kuhn}{Kuhn}{2008}]{Kuhn2008}
Kuhn, M. (2008).
\newblock {Building Predictive Models in R Using the caret Package}.
\newblock {\em Journal of Statistical Software\/}~{\em 28\/}(5).

\bibitem[\protect\citeauthoryear{Leffler, Band, Busby, Kivinen, Le, Clarke,
  Bojang, Conway, Jallow, Sisay-Joof, Bougouma, Mangano, Modiano, Sirima,
  Achidi, Apinjoh, Marsh, Ndila, Peshu, Williams, Drakeley, Manjurano, Reyburn,
  Riley, Kachala, Molyneux, Nyirongo, Taylor, Thornton, Tilley, Grimsley,
  Drury, Stalker, Cornelius, Hubbart, Jeffreys, Rowlands, Rockett, Spencer, and
  Kwiatkowski}{Leffler et~al.}{2017}]{Leffler2017}
Leffler, E.~M., G.~Band, G.~B.~J. Busby, K.~Kivinen, Q.~S. Le, G.~M. Clarke,
  K.~A. Bojang, D.~J. Conway, M.~Jallow, F.~Sisay-Joof, E.~C. Bougouma, V.~D.
  Mangano, D.~Modiano, S.~B. Sirima, E.~Achidi, T.~O. Apinjoh, K.~Marsh, C.~M.
  Ndila, N.~Peshu, T.~N. Williams, C.~Drakeley, A.~Manjurano, H.~Reyburn,
  E.~Riley, D.~Kachala, M.~Molyneux, V.~Nyirongo, T.~Taylor, N.~Thornton,
  L.~Tilley, S.~Grimsley, E.~Drury, J.~Stalker, V.~Cornelius, C.~Hubbart, A.~E.
  Jeffreys, K.~Rowlands, K.~A. Rockett, C.~C.~A. Spencer, and D.~P. Kwiatkowski
  (2017, jun).
\newblock {Resistance to malaria through structural variation of red blood cell
  invasion receptors}.
\newblock {\em Science\/}~{\em 356\/}(6343).

\bibitem[\protect\citeauthoryear{Lerner, Cordes, Ingargiola, Alhadid, Chung,
  Michalet, and Weiss}{Lerner et~al.}{2018}]{Lerner2018}
Lerner, E., T.~Cordes, A.~Ingargiola, Y.~Alhadid, S.~Chung, X.~Michalet, and
  S.~Weiss (2018, jan).
\newblock {Toward dynamic structural biology: Two decades of single-molecule
  F{\"{o}}rster resonance energy transfer}.
\newblock {\em Science\/}~{\em 359\/}(6373).

\bibitem[\protect\citeauthoryear{Liu, Horlbeck, Cho, Birk, Malatesta, He,
  Attenello, Villalta, Cho, Chen, Mandegar, Olvera, Gilbert, Conklin, Chang,
  Weissman, and Lim}{Liu et~al.}{2017}]{Liu2017}
Liu, S.~J., M.~A. Horlbeck, S.~W. Cho, H.~S. Birk, M.~Malatesta, D.~He, F.~J.
  Attenello, J.~E. Villalta, M.~Y. Cho, Y.~Chen, M.~A. Mandegar, M.~P. Olvera,
  L.~A. Gilbert, B.~R. Conklin, H.~Y. Chang, J.~S. Weissman, and D.~A. Lim
  (2017, jan).
\newblock {CRISPRi-based genome-scale identification of functional long
  noncoding RNA loci in human cells}.
\newblock {\em Science\/}~{\em 355\/}(6320).

\bibitem[\protect\citeauthoryear{Meng, Bradley, Street, Francisco, Sparks,
  Berkeley, Hall, Street, Francisco, Xin, Xin, Franklin, Berkeley, and
  Hall}{Meng et~al.}{2016}]{MLlib2016}
Meng, X., J.~Bradley, S.~Street, S.~Francisco, E.~Sparks, U.~C. Berkeley,
  S.~Hall, S.~Street, S.~Francisco, D.~Xin, R.~Xin, M.~J. Franklin, U.~C.
  Berkeley, and S.~Hall (2016).
\newblock {MLlib: Machine Learning in Apache Spark}.
\newblock {\em Journal of Machine Learning Research\/}~{\em 17\/}(34), 1--7.

\bibitem[\protect\citeauthoryear{Mi, Li, Lim, Li, Moissidis, Yang, Gao, Hu,
  Pratt, Price, Sestan, and Mar{\'{i}}n}{Mi et~al.}{2018}]{Mi2018}
Mi, D., Z.~Li, L.~Lim, M.~Li, M.~Moissidis, Y.~Yang, T.~Gao, T.~X. Hu,
  T.~Pratt, D.~J. Price, N.~Sestan, and O.~Mar{\'{i}}n (2018, apr).
\newblock {Early emergence of cortical interneuron diversity in the mouse
  embryo}.
\newblock {\em Science\/}~{\em 360\/}(6384), 81--85.

\bibitem[\protect\citeauthoryear{Mills, Bragina, Kumar, Zimova, Lafferty,
  Feltner, Davis, Hackl{\"{a}}nder, Alves, Good, Melo-Ferreira, Dietz, Abramov,
  Lopatina, and Fay}{Mills et~al.}{2018}]{Mills2018}
Mills, L.~S., E.~V. Bragina, A.~V. Kumar, M.~Zimova, D.~J.~R. Lafferty,
  J.~Feltner, B.~M. Davis, K.~Hackl{\"{a}}nder, P.~C. Alves, J.~M. Good,
  J.~Melo-Ferreira, A.~Dietz, A.~V. Abramov, N.~Lopatina, and K.~Fay (2018,
  mar).
\newblock {Winter color polymorphisms identify global hot spots for
  evolutionary rescue from climate change}.
\newblock {\em Science\/}~{\em 359\/}(6379), 1033--1036.

\bibitem[\protect\citeauthoryear{Moscovich, Nadler, and Spiegelman}{Moscovich
  et~al.}{2016}]{MoscovichNadlerSpiegelman2016}
Moscovich, A., B.~Nadler, and C.~Spiegelman (2016).
\newblock {On the exact Berk-Jones statistics and their $p$-value calculation}.
\newblock {\em Electronic Journal of Statistics\/}~{\em 10\/}(2), 2329--2354.

\bibitem[\protect\citeauthoryear{Ngo, Moufarrej, Rasmussen, Camunas-Soler, Pan,
  Okamoto, Neff, Liu, Wong, Downes, Tibshirani, Shaw, Skotte, Stevenson,
  Biggio, Elovitz, Melbye, and Quake}{Ngo et~al.}{2018}]{Ngo2018}
Ngo, T. T.~M., M.~N. Moufarrej, M.-L.~H. Rasmussen, J.~Camunas-Soler, W.~Pan,
  J.~Okamoto, N.~F. Neff, K.~Liu, R.~J. Wong, K.~Downes, R.~Tibshirani, G.~M.
  Shaw, L.~Skotte, D.~K. Stevenson, J.~R. Biggio, M.~A. Elovitz, M.~Melbye, and
  S.~R. Quake (2018, jun).
\newblock {Noninvasive blood tests for fetal development predict gestational
  age and preterm delivery}.
\newblock {\em Science\/}~{\em 360\/}(6393), 1133--1136.

\bibitem[\protect\citeauthoryear{Ni, Ruff, Alberts, Symmonds, and Cohen}{Ni
  et~al.}{2018}]{Ni2018}
Ni, A.~M., D.~A. Ruff, J.~J. Alberts, J.~Symmonds, and M.~R. Cohen (2018, jan).
\newblock {Learning and attention reveal a general relationship between
  population activity and behavior}.
\newblock {\em Science\/}~{\em 359\/}(6374), 463--465.

\bibitem[\protect\citeauthoryear{Nosil, Villoutreix, de~Carvalho, Farkas,
  Soria-Carrasco, Feder, Crespi, and Gompert}{Nosil et~al.}{2018}]{Nosil2018}
Nosil, P., R.~Villoutreix, C.~F. de~Carvalho, T.~E. Farkas, V.~Soria-Carrasco,
  J.~L. Feder, B.~J. Crespi, and Z.~Gompert (2018, feb).
\newblock {Natural selection and the predictability of evolution in Timema
  stick insects}.
\newblock {\em Science\/}~{\em 359\/}(6377), 765--770.

\bibitem[\protect\citeauthoryear{Ota, Horisaki, Kawamura, Ugawa, Sato,
  Hashimoto, Kamesawa, Setoyama, Yamaguchi, Fujiu, Waki, and Noji}{Ota
  et~al.}{2018}]{Ota2018}
Ota, S., R.~Horisaki, Y.~Kawamura, M.~Ugawa, I.~Sato, K.~Hashimoto,
  R.~Kamesawa, K.~Setoyama, S.~Yamaguchi, K.~Fujiu, K.~Waki, and H.~Noji (2018,
  jun).
\newblock {Ghost cytometry}.
\newblock {\em Science\/}~{\em 360\/}(6394), 1246--1251.

\bibitem[\protect\citeauthoryear{Pedregosa, Varoquaux, Gramfort, Michel,
  Thirion, Grisel, Blondel, Prettenhofer, Weiss, Dubourg, Vanderplas, Passos,
  Cournapeau, Brucher, Perrot, and Duchesnay}{Pedregosa
  et~al.}{2011}]{scikitlearn2011}
Pedregosa, F., G.~Varoquaux, A.~Gramfort, V.~Michel, B.~Thirion, O.~Grisel,
  M.~Blondel, P.~Prettenhofer, R.~Weiss, V.~Dubourg, J.~Vanderplas, A.~Passos,
  D.~Cournapeau, M.~Brucher, M.~Perrot, and {\'{E}}.~Duchesnay (2011).
\newblock {Scikit-learn: Machine Learning in Python}.
\newblock {\em Journal of Machine Learning Research\/}~{\em 12}, 2825--2830.

\bibitem[\protect\citeauthoryear{Poore and Nemecek}{Poore and
  Nemecek}{2018}]{Poore2018}
Poore, J. and T.~Nemecek (2018, jun).
\newblock {Reducing food's environmental impacts through producers and
  consumers}.
\newblock {\em Science\/}~{\em 360\/}(6392), 987--992.

\bibitem[\protect\citeauthoryear{Russo and Zou}{Russo and
  Zou}{2020}]{RussoZou2020}
Russo, D. and J.~Zou (2020, jan).
\newblock {How Much Does Your Data Exploration Overfit? Controlling Bias via
  Information Usage}.
\newblock {\em IEEE Transactions on Information Theory\/}~{\em 66\/}(1),
  302--323.

\bibitem[\protect\citeauthoryear{Scheib, Li, Desai, Link, Kendall, Dewar,
  Griffith, M{\"{o}}rseburg, Johnson, Potter, Kerr, Endicott, Lindo, Haber,
  Xue, Tyler-Smith, Sandhu, Lorenz, Randall, Faltyskova, Pagani, Danecek,
  O'Connell, Martz, Boraas, Byrd, Leventhal, Cambra, Williamson, Lesage,
  Holguin, {Ygnacio-De Soto}, Rosas, Metspalu, Stock, Manica, Scally, Wegmann,
  Malhi, and Kivisild}{Scheib et~al.}{2018}]{Scheib2018}
Scheib, C.~L., H.~Li, T.~Desai, V.~Link, C.~Kendall, G.~Dewar, P.~W. Griffith,
  A.~M{\"{o}}rseburg, J.~R. Johnson, A.~Potter, S.~L. Kerr, P.~Endicott,
  J.~Lindo, M.~Haber, Y.~Xue, C.~Tyler-Smith, M.~S. Sandhu, J.~G. Lorenz, T.~D.
  Randall, Z.~Faltyskova, L.~Pagani, P.~Danecek, T.~C. O'Connell, P.~Martz,
  A.~S. Boraas, B.~F. Byrd, A.~Leventhal, R.~Cambra, R.~Williamson, L.~Lesage,
  B.~Holguin, E.~{Ygnacio-De Soto}, J.~Rosas, M.~Metspalu, J.~T. Stock,
  A.~Manica, A.~Scally, D.~Wegmann, R.~S. Malhi, and T.~Kivisild (2018, jun).
\newblock {Ancient human parallel lineages within North America contributed to
  a coastal expansion}.
\newblock {\em Science\/}~{\em 360\/}(6392), 1024--1027.

\bibitem[\protect\citeauthoryear{Shalev-Shwartz, Shamir, Srebro, and
  Sridharan}{Shalev-Shwartz
  et~al.}{2010}]{ShalevshwartzShamirSrebroSridharan2010}
Shalev-Shwartz, S., O.~Shamir, N.~Srebro, and K.~Sridharan (2010).
\newblock {Learnability, Stability and Uniform Convergence}.
\newblock {\em Journal of Machine Learning Research\/}~{\em 11}, 2635--2670.

\bibitem[\protect\citeauthoryear{Simon, Radmacher, Dobbin, and McShane}{Simon
  et~al.}{2003}]{SimonEtal2003}
Simon, R., M.~D. Radmacher, K.~Dobbin, and L.~M. McShane (2003).
\newblock {Pitfalls in the Use of DNA Microarray Data for Diagnostic and
  Prognostic Classification}.
\newblock {\em JNCI Journal of the National Cancer Institute\/}~{\em 95\/}(1),
  14--18.

\bibitem[\protect\citeauthoryear{Speed and Balding}{Speed and
  Balding}{2014}]{SpeedBalding2014}
Speed, D. and D.~J. Balding (2014).
\newblock {MultiBLUP : improved SNP-based prediction for complex traits}.
\newblock {\em Genome Research\/}~{\em 24\/}(9), 1550--1557.

\bibitem[\protect\citeauthoryear{Tibshirani}{Tibshirani}{1996}]{Tibshirani1996}
Tibshirani, R. (1996).
\newblock {Regression Shrinkage and Selection via the Lasso}.
\newblock {\em Journal of the Royal Statistical Society. Series B: Statistical
  Methodology\/}~{\em 58\/}(1), 267--288.

\bibitem[\protect\citeauthoryear{Urig{\"{u}}en and
  Garcia-Zapirain}{Urig{\"{u}}en and
  Garcia-Zapirain}{2015}]{UriguenGarciaZapirain2015}
Urig{\"{u}}en, J.~A. and B.~Garcia-Zapirain (2015, jun).
\newblock {{EEG artifact removal---state-of-the-art and guidelines}}.
\newblock {\em Journal of Neural Engineering\/}~{\em 12\/}(3), 1--23.

\bibitem[\protect\citeauthoryear{{Van Meter}, {Van Cappellen}, and Basu}{{Van
  Meter} et~al.}{2018}]{VanMeter2018}
{Van Meter}, K.~J., P.~{Van Cappellen}, and N.~B. Basu (2018, apr).
\newblock {Legacy nitrogen may prevent achievement of water quality goals in
  the Gulf of Mexico}.
\newblock {\em Science\/}~{\em 360\/}(6387), 427--430.

\bibitem[\protect\citeauthoryear{{Wickham, Hadley and Grolemund}}{{Wickham,
  Hadley and Grolemund}}{2017}]{WickhamGrolemund2017}
{Wickham, Hadley and Grolemund}, G. (2017).
\newblock {\em {R for Data Science}}.
\newblock O'Reilly Media.

\bibitem[\protect\citeauthoryear{Xu, Caramanis, and Mannor}{Xu
  et~al.}{2010}]{XuCaramanisMannor2010}
Xu, H., C.~Caramanis, and S.~Mannor (2010, jul).
\newblock {Robust Regression and Lasso}.
\newblock {\em IEEE Transactions on Information Theory\/}~{\em 56\/}(7),
  3561--3574.

\bibitem[\protect\citeauthoryear{Xu and Mannor}{Xu and
  Mannor}{2012}]{XuMannor2012}
Xu, H. and S.~Mannor (2012, mar).
\newblock {Robustness and generalization}.
\newblock {\em Machine Learning\/}~{\em 86\/}(3), 391--423.

\bibitem[\protect\citeauthoryear{Yang, Benyamin, McEvoy, Gordon, Henders,
  Nyholt, Madden, Heath, Martin, Montgomery, Goddard, and Visscher}{Yang
  et~al.}{2010}]{YangEtal2010}
Yang, J., B.~Benyamin, B.~P. McEvoy, S.~Gordon, A.~K. Henders, D.~R. Nyholt,
  P.~A. Madden, A.~C. Heath, N.~G. Martin, G.~W. Montgomery, M.~E. Goddard, and
  P.~M. Visscher (2010, jul).
\newblock {Common SNPs explain a large proportion of the heritability for human
  height}.
\newblock {\em Nature Genetics\/}~{\em 42\/}(7), 565--569.

\bibitem[\protect\citeauthoryear{Zhou, Zhu, Fan, Wang, Chen, Liang, Yang,
  Zhang, Lin, Zhan, Wang, and Hu}{Zhou et~al.}{2017}]{Zhou2017}
Zhou, T., H.~Zhu, Z.~Fan, F.~Wang, Y.~Chen, H.~Liang, Z.~Yang, L.~Zhang,
  L.~Lin, Y.~Zhan, Z.~Wang, and H.~Hu (2017, jul).
\newblock {History of winning remodels thalamo-PFC circuit to reinforce social
  dominance}.
\newblock {\em Science\/}~{\em 357\/}(6347), 162--168.

\end{thebibliography}

\end{document}